\documentclass[final]{siamltex}
\usepackage{bm} 
\usepackage{amsfonts} 
\usepackage{multirow} 
\usepackage{relsize}  
\usepackage{adjustbox}
\usepackage[dvips]{geometry}
\usepackage{color} 
\usepackage{epsfig}
\usepackage{latexsym}
\usepackage{pstricks}
\usepackage{url}

\newcommand{\tr}{^{\sf T}}      
\newcommand{\m}[1]{{\bf{#1}}}   
\newcommand{\g}[1]{\bm #1}      
\newcommand{\C}[1]{{\cal {#1}}} 
\newcommand{\underscore}{\underline{\hspace{0.25cm}}}

\newcommand{\AI}{\m{H}}

\newtheorem{remark}{Remark}[section]


\title{
A Multilevel Bilinear Programming Algorithm For the Vertex Separator Problem
\thanks{
May 25, 2016.
The research was supported by
the Office of Naval Research under Grants N00014-11-1-0068
and N00014-15-1-2048 and
by the National Science Foundation under grants 1522629
and 1522751.
Part of the research was performed while the second author
was a Givens Associate at Argonne National Laboratory.
}
}

\author{
        William W. Hager\thanks{{\tt hager@ufl.edu},
        http://people.clas.ufl.edu/hager/
        PO Box 118105,
        Department of Mathematics,
        University of Florida, Gainesville, FL 32611-8105.
        Phone (352) 294-2308. Fax (352) 392-8357.}
\and
        James T. Hungerford\thanks{{\tt jamesthungerford@gmail.com},
        M.A.I.O.R. -- Management, Artificial Intelligence, and Operations
        Research, Srl., Lucca, ITALY}
\and
        Ilya Safro\thanks{{\tt isafro@clemson.edu},
        http://www.cs.clemson.edu/$\sim$isafro,
        228 McAdams Hall,
        School of Computing,
        Clemson University, Clemson, SC 29634.
        Phone (864) 656-0637.}
}

\begin{document}

\maketitle

\begin{abstract}
The Vertex Separator Problem for a graph is
to find the smallest collection of
vertices whose removal breaks the graph into two disconnected subsets that
satisfy specified size constraints.
The Vertex Separator Problem was formulated in the paper
10.1016/j.ejor.2014.05.042
as a continuous (non-concave/non-convex) bilinear quadratic program.
In this paper, we develop a more general continuous bilinear program
which incorporates vertex weights, and which applies to the coarse graphs
that are generated in a multilevel compression of the original
Vertex Separator Problem.
A Mountain Climbing Algorithm is used to find a stationary point of
the bilinear program, while 
perturbation techniques are used to either dislodge an iterate
from a saddle point or escape from a local optimum.
We determine the smallest possible perturbation that will
force the current iterate to a different location, with a
possibly better separator.
The algorithms for solving the bilinear program are employed
during the solution and refinement phases in a multilevel scheme.
Computational results and comparisons demonstrate the
advantage of the proposed algorithm.
\end{abstract}

\begin{keywords}
vertex separator, continuous formulation, graph partitioning, combinatorial
optimization, multilevel computations, graphs, weighted edge contractions,
coarsening, relaxation, multilevel algorithm, algebraic distance
\end{keywords}

\begin{AMS}
90C35, 
90C27, 
90C20, 
90C06  
\end{AMS}

\pagestyle{myheadings} \thispagestyle{plain}
\markboth{W. W. HAGER, J. T. HUNGERFORD, AND I. SAFRO}
{A MULTILEVEL ALGORITHM FOR THE VERTEX SEPARATOR PROBLEM}

\section{Introduction}
Let $G = (\C{V}, \C{E})$ be a graph on vertex set $\C{V} = \{1,2,\ldots,n\}$ and
edge set $\C{E} \subseteq \C{V} \times \C{V}$.
We assume $G$ is simple and
undirected; that is for any vertices $i$ and $j$ we have
$(i,i) \notin \C{E}$ and
$(i,j) \in \C{E}$ if and only if $(j,i) \in \C{E}$ (note that this implies that
$|\C{E}|$, the number of elements in $\C{E}$,
is twice the total number of edges in $G$).
For each $i \in \C{V}$, let
$c_i\in\mathbb{R}$
denote the cost and $w_i > 0$ denote the weight of vertex $i$.
If $\C{Z} \subseteq \C{V}$, then
\[
\C{C}(\C{Z}) = \sum_{i \in \C{Z}} c_i \quad \mbox{and} \quad
\C{W}(\C{Z}) = \sum_{i \in \C{Z}} w_i
\]
denote the total cost and weight of the vertices in $\C{Z}$, respectively.

If the vertices $\C{V}$ are partitioned into three disjoint sets
$\C{A}$, $\C{B}$, and $\C{S}$, then $\C{S}$ \emph{separates} $\C{A}$ and 
$\C{B}$ if there is no edge $(i,j)\in\C{E}$ with $i\in\C{A}$ and $j\in\C{B}$.
The Vertex Separator Problem (VSP) is to minimize the cost 
of $\C{S}$ 
while requiring that $\C{A}$ and $\C{B}$ have
approximately the same weight.
We formally state the VSP as follows:
\begin{eqnarray}\label{VSP}
& \displaystyle {\min_{\C{A},\C{S},\C{B} \subseteq \C{V}} \quad \C{C}(\C{S})} & \\
& \displaystyle {\mbox{subject to} \quad
\C{S} = \C{V} \setminus (\C{A}\cup\C{B}), \quad
\C{A} \cap \C{B} = \emptyset, \quad
(\C{A} \times \C{B}) \cap \C{E} = \emptyset, } & \nonumber \\
& \displaystyle {\ell_a \le \C{W}(\C{A}) \le u_a, \;\; \mbox{and} \;\;
  \ell_b \le \C{W}(\C{B}) \le u_b}, & \nonumber
\end{eqnarray}
where $\ell_a$, $\ell_b$, $u_a$, and $u_b$ are given nonnegative
real numbers less than or equal to $\C{W}(\C{V})$.
The constraints
$\C{S} = \C{V} \setminus (\C{A}\cup\C{B})$
and $\C{A} \cap \C{B} = \emptyset$ ensure that $\C{V}$ is partitioned into
disjoint sets $\C{A}$, $\C{B}$, and $\C{S}$, while the constraint
$(\C{A} \times \C{B}) \cap \C{E} = \emptyset$ ensures that there are
no edges between the sets $\C{A}$ and $\C{B}$.
Throughout the paper, we assume (\ref{VSP}) is feasible.
In particular, if $\ell_a, \ell_b \ge 1$, then there exist at least two
distinct vertices $i$ and $j$ such that $(i,j)\notin \C{E}$; that is, $G$ is not
a complete graph.

Vertex separators have applications in VLSI design \cite{KernighanLin70,
Leiserson80, Ullman84}, finite element methods
\cite{MillerTengThurstonVavasis98}, parallel processing \cite{Evrendilek2008},
sparse matrix factorizations
(\cite[Sect.~7.6]{Davis06book},
\cite[Chapter~8]{GeorgeLiu}, and
\cite{PothenSimonLiou90}), hypergraph
partitioning \cite{KayaaslanPinar},
and network security \cite{raey, kim2013combinatoric,LitsasPPS13}.
The VSP is NP-hard \cite{BuiJones92, Fukuyama2006}. However, due to its
practical significance, many heuristics have been developed for obtaining
approximate solutions, including node-swapping heuristics
\cite{LeisersonLewis89}, spectral methods \cite{PothenSimonLiou90}, semidefinite
programming methods \cite{FeigeHajiaghayiLee2008}, and recently a breakout local
search algorithm \cite{BenlicHao}.

It has been demonstrated repeatedly that for problems on large-scale graphs,
such as finding minimum $k$-partitionings
\cite{bmsss13,Hendrickson,KarypisKumar98e}
or minimum linear arrangements \cite{ron2011relaxation,SafroRonBrandt06a},
optimization algorithms can be much more effective when carried out
in a multilevel framework.
In a multilevel framework, a hierarchy of increasingly smaller graphs
is generated which approximate the original graph,
but with fewer degrees of freedom.
The problem is solved for the coarsest graph in the hierarchy,
and the solution is gradually uncoarsened and refined to obtain
a solution for the original graph.
During the uncoarsening phase, optimization algorithms are commonly
employed locally  to make fast  improvements to the solution at each
level in the algorithm.
Although multilevel algorithms are inexact
for most NP-hard problems on graphs, they typically produce very high
quality solutions and
are very fast (often linear in the number of vertices plus the number
of edges with no hidden coefficients).

Early methods \cite{GilbertZmijewski87, PothenSimonLiou90},
for computing vertex separators were based on computing
edge separators (bipartitions of $\C{V}$ with low cost edge-cuts).
In these
algorithms, vertex separators are obtained from edge separators by selecting
vertices incident to the edges in the cut. More recently, \cite{AcerKayaaslan}
gave a method for computing vertex separators in a graph
by finding low cost net-cuts in an associated hypergraph.
Some of the most widely used heuristics for computing edge separators are the
node swapping heuristics of Fiduccia-Mattheyses
\cite{FiducciaMattheyses82} and Kernighan-Lin \cite{KernighanLin70}, in 
which vertices are exchanged between sets until the current partition is
considered to be locally optimal.
Many multilevel edge separator
algorithms have been developed and incorporated into
graph partitioning packages (see survey in \cite{bmsss13}).
In \cite{AshcraftLiu94},
a Fiduccia-Mattheyses type heuristic is used to find vertex separators
directly. Variants of this algorithm have been incorporated into the multilevel
graph partitioners METIS \cite{KarypisKumar95, KarypisKumar98e} and BEND
\cite{HendricksonRothberg97b}.

In \cite{HagerHungerford14}, the authors make a departure from traditional
discrete-based heuristics for solving the VSP, and present the first formulation
of the problem as a continuous optimization problem. In particular, 
when the vertex weights are identically one,
conditions
are given under which the VSP is equivalent to solving a continuous bilinear
quadratic program.

The preliminary numerical results of \cite{HagerHungerford14} indicate that
the bilinear programming formulation can serve as an effective tool
for making local improvements to a solution in a multilevel context.
The current work makes the following contributions:
\smallskip
\begin{itemize}
\item [1.] The bilinear programming model of \cite{HagerHungerford14} is
extended to the case where vertex weights are possibly greater than one.
This generalization is important since each vertex in a multilevel
compression corresponds to a collection of vertices in the original graph.
The bilinear formulation of the compressed graph is not exactly
equivalent to the VSP for the compressed graph, but it very closely
approximates the VSP as we show.
\item[2.] Optimization algorithms applied to the bilinear VSP model
converge to stationary points which may not be global optima.
Two techniques are developed to escape from a stationary point
and explore a new part of the solution space.
One technique uses the first-order optimality conditions to
construct an infinitesimal perturbation of the objective function
with the property that a gradient descent step
moves the iterate to a new location
where the separator could be smaller.
This technique is particularly effective when the current iterate lies
at a saddle point rather than a local optimum.
The second technique involves relaxing the constraint
that there are no edges between the sets in the partition.
Since this constraint is enforced by a penalty in the objective,
we determine the smallest possible relaxation of the penalty for
which a gradient descent step moves the iterate to a new location where the
separator could be smaller.
\item[3.] A multilevel algorithm is developed which incorporates the weighted
bilinear program in the refinement phase along with the techniques to
escape from a stationary point.
Computational results are given to compare the quality of the solutions
obtained with the bilinear programming approach to a multilevel vertex
separator routine in the METIS package.
The algorithm is shown to be especially effective on p2p networks and graphs
having heavy-tailed degree distributions.
\end{itemize}
\smallskip

The outline of the paper is as follows. Section \ref{sectProblem} reviews the
bilinear programming formulation of the VSP in \cite{HagerHungerford14}
and develops the weighted formulation which is suitable for the coarser
levels in our algorithm.
Section \ref{sectSolve} presents an algorithm for finding
approximate solutions to the bilinear program and develops two techniques
for escaping from a stationary point.
Section \ref{sectAlgorithm} summarizes the multilevel framework, while
Section \ref{sectResults} gives numerical results
comparing our algorithm to METIS.
Conclusions are drawn in Section \ref{sectConclude}.

\section{Notation}
Vectors or matrices whose entries are all $0$ or all $1$
are denoted by $\m{0}$ or $\m{1}$ respectively,
where the dimension will be clear from the context.
If $\m{x} \in \mathbb{R}^n$ and
$f:\mathbb{R}^n\rightarrow \mathbb{R}$, then $\nabla f (\m{x})$ denotes
the gradient of $f$ at $\m{x}$, a row vector, and $\nabla^2 f (\m{x})$
is the Hessian. If $f : \mathbb{R}^n \times \mathbb{R}^n \rightarrow
\mathbb{R}$,
then $\nabla_{\m{x}} f (\m{x},\m{y})$ is the row vector corresponding to the
first $n$ entries of $\nabla f (\m{x},\m{y})$,
while $\nabla_{\m{y}} f (\m{x},\m{y})$
is the row vector
corresponding to the last $n$ entries.
If $\m{A}$ is a matrix, then $\m{A}_i$ denotes the $i$-th row of $\m{A}$.
If $\m{x}\in\mathbb{R}^n$,
then $\m{x}\ge \m{0}$ means $x_i \ge 0$ for all $i$,
and $\m{x}\tr$ denotes the transpose, a row vector.
Let $\m{I}\in\mathbb{R}^{n\times n}$ denote the $n\times n$ identity matrix,
let $\m{e}_i$ denote the $i$-th column of $\m{I}$,
and let $|\C{A}|$ denote the number of elements in the set $\C{A}$.
%
%

\section{Bilinear programming formulation}
\label{sectProblem}
Since minimizing $\C{C}(\C{S})$ in (\ref{VSP}) is equivalent to maximizing 
$\C{C}(\C{A} \cup \C{B})$, we may view the VSP as the following 
maximization problem:
\begin{eqnarray}\label{VSP2}
& \displaystyle {\max_{\C{A},\C{B} \subseteq \C{V}} \quad \C{C}(\C{A}\cup\C{B})} 
& \\
& \displaystyle {\mbox{subject to} \quad
\C{A} \cap \C{B} = \emptyset, \quad
(\C{A} \times \C{B}) \cap \C{E} = \emptyset, } & \nonumber \\
& \displaystyle {\ell_a \le \C{W}(\C{A}) \le u_a, \;\; \mbox{and} \;\;
\ell_b \le \C{W}(\C{B}) \le u_b}.  & \nonumber
\end{eqnarray}
Next, 
observe that any pair of subsets $\C{A}, \C{B} \subseteq \C{V}$ is associated with
a pair of incidence vectors $\m{x},\m{y} \in \{0,1\}^{n}$ defined by
\begin{equation}\label{xydef}
x_i = \left\{
\begin{array}{rrr}
1, & \mbox{if} \; i \in \C{A}\\
0, & \mbox{if} \; i \notin \C{A}
\end{array}
\right.
\quad \mbox{and} \quad
y_i = \left\{
\begin{array}{rrr}
1, & \mbox{if} \; i \in \C{B}\\
0, & \mbox{if} \; i \notin \C{B}
\end{array}
\right. .
\end{equation}
Let $\AI := (\m{A} + \m{I})$, where
$\m{A}$ is the $n \times n$ 
adjacency matrix for $G$ defined by $a_{ij} = 1$ if $(i,j) \in \C{E}$ and 
$a_{ij} = 0$ otherwise, and $\m{I}$ is the $n\times n$ identity matrix. Note
that since $G$ is undirected, $\AI$ is symmetric.
Then we have
\begin{eqnarray}
{\m{x}\tr\AI\m{y}}\quad & {=} &\quad
{\sum_{i = 1}^n\sum_{j = 1}^n x_i a_{ij} y_j}
\; \, {+} \;
{\sum_{i = 1}^n x_i y_i} \nonumber\\
& {=} & \quad
{\sum_{x_i = 1}\sum_{y_j = 1} a_{ij}}
\; \; \quad {+} \; {\sum_{x_i = y_i= 1} 1} \nonumber \\
& {=} & \quad
{\sum_{i \in\C{A}}\sum_{j\in\C{B}} a_{ij}} \; \quad \quad {+} \;
{\sum_{i \in \C{A} \cap \C{B}} 1} \nonumber \\
& {=} & \quad
{|(\C{A}\times\C{B})\cap\C{E}|} \, \; \;\; {+} \; \;
{|\C{A} \cap \C{B}|}.  \label{penaltymeaning}
\end{eqnarray}
So, the constraints
$\C{A} \cap \C{B} = \emptyset$ and
$(\C{A} \times \C{B}) \cap \C{E} = \emptyset$
in (\ref{VSP2}) hold if and only if 
\begin{equation}\label{sepcond}
\m{x}\tr\AI\m{y} = 0.
\end{equation}
Hence, a binary formulation of (\ref{VSP2}) is
\begin{eqnarray}
& \displaystyle {\max_{\m{x}, \m{y} \in \{0, 1\}^n} \;\;
\m{c}\tr(\m{x} + \m{y})} & \nonumber \\
\displaystyle {\mbox{subject to}} & 
\m{x}\tr\AI\m{y} = 0, \label{preBVSP}\\
& \ell_a \le \m{w}\tr\m{x}  \le u_a, \;\; \mbox{and} \;\;
  \ell_b \le \m{w}\tr\m{y}  \le u_b, \nonumber
\nonumber
\end{eqnarray}
where $\m{c}$ and $\m{w}$ are the $n-$dimensional vectors which
store the costs $c_i$ and weights $w_i$ of vertices, respectively.
 
Now, consider the following problem in which the quadratic constraint
of (\ref{preBVSP}) has been relaxed:
\begin{eqnarray}\label{BVSP}
& \displaystyle {\max_{\m{x}, \m{y} \in \{0, 1\}^n} \;\;
f (\m{x}, \m{y}) := \; \m{c}\tr(\m{x} + \m{y}) -
\gamma\m{x}\tr\AI\m{y}} & \\
& \begin{array}{cl} \mbox{subject to} &
\ell_a \le \m{w}\tr\m{x}  \le u_a\;\; \mbox{ and }\;\;
\ell_b \le \m{w}\tr\m{y}  \le u_b.
\end{array}
\nonumber
\end{eqnarray}
Here, $\gamma \in \mathbb{R}$. Notice that $\gamma \m{x}\tr\AI\m{y}$
acts as a penalty term in (\ref{BVSP}) when $\gamma \ge 0$, since
$\m{x}\tr\AI\m{y} \ge 0$ for every $\m{x},\m{y} \in \{0,1\}^n$.
Moreover, (\ref{BVSP}) gives a relaxation of (\ref{preBVSP}), since
the constraint (\ref{sepcond}) is not enforced.
Problem (\ref{BVSP}) is feasible since the VSP (\ref{VSP2}) is feasible by
assumption.
The following proposition gives conditions under which (\ref{BVSP}) is
essentially equivalent to (\ref{preBVSP}) and (\ref{VSP2}).
\smallskip

\begin{proposition} \label{binary}
If $\m{w} \ge \m{1}$ and $\gamma > 0$ with
$\gamma \ge \max\{c_i : i\in\C{V}\}$,
then for any feasible point $(\m{x},\m{y})$ in $(\ref{BVSP})$ satisfying
\begin{equation}\label{lowerbound}
f (\m{x}, \m{y}) \ge \gamma (\ell_a + \ell_b),
\end{equation}
there is a
feasible point
$(\bar{\m{x}},\bar{\m{y}})$ in $(\ref{BVSP})$ such that 
\begin{equation}\label{endpointconditions}
f(\bar{\m{x}}, \bar{\m{y}}) \ge f(\m{x}, \m{y})\quad \mbox{and}\quad
\bar{\m{x}}\tr \AI\bar{\m{y}} = 0.
\end{equation}
Hence, if the optimal objective value in $(\ref{BVSP})$ is at least
$\gamma (\ell_a + \ell_b)$, then there exists an optimal solution 
$(\m{x}^*, \m{y}^*)$ to $(\ref{BVSP})$ such that an optimal solution to
$(\ref{VSP2})$ is given by 
\begin{equation}\label{ABS}
\C{A} = \{ i : x^*_i = 1 \}, \;\;
\C{B} = \{ i : y^*_i = 1 \}, \;\;\mbox{and} \;\;
\C{S} = \{ i : x^*_i = y^*_i = 0 \}.
\end{equation}
\end{proposition}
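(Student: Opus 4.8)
The plan is to reduce the whole statement to a single repeatable local step: starting from any feasible $(\m{x},\m{y})$ satisfying the hypothesis (\ref{lowerbound}) with $\m{x}\tr\AI\m{y} = p > 0$, produce another feasible point whose objective is no smaller and whose penalty $\m{x}\tr\AI\m{y}$ is strictly smaller. By (\ref{penaltymeaning}) the penalty equals $|(\C{A}\times\C{B})\cap\C{E}| + |\C{A}\cap\C{B}|$, a nonnegative integer, so iterating the step terminates after finitely many reductions at a feasible point $(\bar{\m{x}},\bar{\m{y}})$ with $\bar{\m{x}}\tr\AI\bar{\m{y}} = 0$ and $f(\bar{\m{x}},\bar{\m{y}}) \ge \fxy$, which is precisely (\ref{endpointconditions}).

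The engine of the local step is the hypothesis $\gamma \ge \max\{c_i\}$. If $i\in\C{A}$ takes part in a violation, i.e. $\AI_i\m{y}\ge 1$, then deleting $i$ from $\C{A}$ (setting $x_i=0$) changes the objective by $-c_i + \gamma\,\AI_i\m{y} \ge -\gamma + \gamma = 0$ and lowers the penalty by $\AI_i\m{y}\ge 1$; the symmetric statement holds for deleting a violating $j\in\C{B}$. Thus deleting a single conflicting vertex is objective-nondecreasing and penalty-decreasing, and the only thing that can go wrong is feasibility of the weight bounds. Here the other hypotheses enter. Since $c_i\le\gamma$, the bound (\ref{lowerbound}) gives $\gamma(|\C{A}|+|\C{B}|) - \gamma p \ge \fxy \ge \gamma(\ell_a+\ell_b)$, hence $|\C{A}|+|\C{B}| \ge \ell_a+\ell_b+p$; and since $\m{w}\ge\m{1}$ forces $\C{W}(\C{A})\ge|\C{A}|$ and $\C{W}(\C{B})\ge|\C{B}|$, the total weight slack satisfies $(\C{W}(\C{A})-\ell_a)+(\C{W}(\C{B})-\ell_b) \ge p \ge 1$. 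Because each deletion keeps the objective $\ge \gamma(\ell_a+\ell_b)$, this bound persists throughout the iteration.

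The main obstacle is that the quantity guaranteed above bounds the \emph{total} weight slack, whereas a single conflicting vertex may be heavy: deleting it could push $\C{W}(\C{A})$ below $\ell_a$ even though $\C{W}(\C{A})-\ell_a \ge 0$. This is exactly the phenomenon absent from the unit-weight model of \cite{HagerHungerford14}, where one unit of slack pays for one deletion. When no conflicting vertex can be deleted outright without violating a lower bound, the repair I intend to use is a \emph{compensated} deletion: remove the heavy conflicting vertex from its set and simultaneously draw vertices out of the separator $\C{S}$ into the deficient set to restore its lower bound, choosing the reassigned vertices so that they create no new conflict (they must lie outside the opposite set and be nonadjacent to it, so their addition does not increase the penalty). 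Proving that enough such vertices exist, and that the net objective change is nonnegative, is where the standing assumption that (\ref{VSP}) is feasible must be used, together with $w_i\ge 1$ to convert weight deficits into vertex counts; I expect this existence-and-accounting argument to be the technical heart of the proof.

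Finally, the concluding ``Hence'' sentence follows by applying the reduction to an optimal point of (\ref{BVSP}): if the optimal value is at least $\gamma(\ell_a+\ell_b)$, an optimal $(\m{x},\m{y})$ meets the hypothesis, so the produced $(\bar{\m{x}},\bar{\m{y}})$ is feasible, satisfies $\bar{\m{x}}\tr\AI\bar{\m{y}}=0$, and has $f(\bar{\m{x}},\bar{\m{y}})\ge\fxy$, hence is itself optimal. By (\ref{penaltymeaning}) the condition $\bar{\m{x}}\tr\AI\bar{\m{y}}=0$ is equivalent to $\C{A}\cap\C{B}=\emptyset$ and $(\C{A}\times\C{B})\cap\C{E}=\emptyset$, so the sets in (\ref{ABS}) are feasible for (\ref{VSP2}) with cost equal to the optimal value, and therefore optimal for the VSP.
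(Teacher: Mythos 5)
Your framework (iterate a single objective-nondecreasing, penalty-decreasing deletion; terminate because the penalty is a nonnegative integer; read off the ``Hence'' part at the end) matches the paper's, but the proof is not complete: the step you defer as ``the technical heart'' --- a compensated deletion that pulls vertices out of $\C{S}$ to restore a violated lower bound --- is both unjustified and unnecessary. It is unjustified because nothing guarantees that $\C{S}$ contains a vertex that can be moved into the deficient shore without creating a new conflict (every vertex of $\C{S}$ may be adjacent to the opposite shore), and the blanket feasibility of (\ref{VSP}) says nothing about the particular partition you are currently holding. More importantly, the obstacle this device is meant to overcome is a phantom.

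The resolution is a small rearrangement of facts you already derived. From $\m{c}\tr(\m{x}+\m{y}) \le \gamma\,\m{1}\tr(\m{x}+\m{y})$ and (\ref{lowerbound}) you obtained $|\C{A}|+|\C{B}| \ge \ell_a+\ell_b+p$ with $p\ge 1$; do not convert this into a bound on the \emph{weight} slack. Instead conclude (using integrality of the cardinalities and of $\ell_a,\ell_b$) that, say, $|\C{A}| \ge \ell_a+1$, and then observe that after deleting \emph{any} single vertex $i$ from $\C{A}$ --- no matter how heavy ---
\[
\m{w}\tr(\m{x}-\m{e}_i) \;\ge\; \m{1}\tr(\m{x}-\m{e}_i) \;=\; |\C{A}|-1 \;\ge\; \ell_a ,
\]
because $\m{w}\ge\m{1}$ bounds the remaining weight below by the remaining \emph{count}; the upper bound $u_a$ is trivially preserved since the weight only decreases. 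Whenever $p>0$ at least one of $|\C{A}|>\ell_a$, $|\C{B}|>\ell_b$ holds, and by symmetry of $\AI$ each shore then contains a conflicting vertex, so an uncompensated deletion from the shore with cardinality slack is always available; since each deletion preserves the invariant $f\ge\gamma(\ell_a+\ell_b)$, the argument repeats until $p=0$. This is exactly the paper's proof; your version assembles all of its ingredients and then stalls one step short of combining them.
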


\begin{proof}
Let $(\m{x},\m{y})$ be a feasible point in (\ref{BVSP}) satisfying
(\ref{lowerbound}).
Since $\m{x}$, $\m{y}$, and $\AI$ are nonnegative,
we have $\m{x}\tr\AI\m{y} \ge 0$.
If $\m{x}\tr\AI\m{y} = 0$, then we simply take
$\bar{\m{x}} = \m{x}$ and $\bar{\m{y}} = \m{y}$, and (\ref{endpointconditions})
is satisfied.
Now suppose instead that
\begin{equation}\label{pospenalty}
\m{x}\tr\AI\m{y} > 0.
\end{equation}
Then,
\begin{eqnarray}
\gamma (\ell_a + \ell_b) \le f(\m{x}, \m{y}) & = & \m{c}\tr(\m{x} + \m{y})
- \gamma\m{x}\tr\AI\m{y} \label{h67} \\
&<& \m{c}\tr (\m{x} + \m{y}) \label{h68}\\
&\le& \gamma \m{1}\tr(\m{x} + \m{y}) . \label{h69}
\end{eqnarray}
Here, (\ref{h67}) is due to (\ref{lowerbound}),
(\ref{h68}) is due to (\ref{pospenalty}) and the assumption that $\gamma > 0$,
and (\ref{h69}) holds by the assumption that
$\gamma \ge \max\{c_i : i\in\C{V}\}$.
It follows that either $\m{1}\tr\m{x} > \ell_a$ or
$\m{1}\tr\m{y} > \ell_b$.

Assume without loss of generality that $\m{1}\tr\m{x} > \ell_a$.
Since $\m{x}$ is binary and $\ell_a$ is an integer, we have
\[
\m{1}\tr\m{x} \ge \ell_a + 1.
\]
Since the entries in $\m{x}$, $\m{y}$, and $\AI$ are all non-negative
integers, (\ref{pospenalty}) implies that there exists an index $i$ such
that $\AI_i \m{y} \ge 1$ and $x_i = 1$ (recall that
subscripts on a matrix correspond to the rows).
If $\hat{\m{x}} = \m{x} - \m{e}_i$, then $(\hat{\m{x}},\m{y})$
is feasible in problem (\ref{BVSP})
since $u_a \ge \m{w}\tr\m{x} > \m{w}\tr\hat{\m{x}}$ and
\[
\m{w}\tr \hat{\m{x}} \ge \m{1}\tr\hat{\m{x}} = \m{1}\tr\m{x} - 1 \ge \ell_a.
\]
Here the first inequality is due to the assumption that $\m{w}\ge \m{1}$.
Furthermore,
\begin{equation}\label{H}
f(\hat{\m{x}},\m{y}) =
f(\m{x},\m{y}) - c_i + \gamma \AI_i \m{y}
\ge f(\m{x},\m{y}) - c_i + \gamma
\ge f(\m{x},\m{y}),
\end{equation}
since $\AI_i\m{y} \ge 1$, $\gamma \ge 0$, and $\gamma \ge c_i$.
We can continue to set components of
$\m{x}$ and $\m{y}$ to 0 until reaching a binary feasible point
$(\bar{\m{x}}, \bar{\m{y}})$ for which
$\bar{\m{x}}\tr \AI\bar{\m{y}} = 0$ and
$f(\bar{\m{x}}, \bar{\m{y}}) \ge f(\m{x}, \m{y})$.
This completes the proof of the first claim in the proposition.

Now, if the optimal objective value in (\ref{BVSP}) is at least 
$\gamma (\ell_a + \ell_b)$, then by the first part of the proposition, we
may find an
optimal solution $(\m{x}^*,\m{y}^*)$ satisfying (\ref{sepcond});
hence, $(\m{x}^*,\m{y}^*)$ is feasible in (\ref{preBVSP}).
Since (\ref{BVSP}) is a relaxation of (\ref{preBVSP}), $(\m{x}^*,\m{y}^*)$ is
optimal in (\ref{preBVSP}).
Hence, the partition $(\C{A},\C{S},\C{B})$ defined by (\ref{ABS}) is optimal
in (\ref{VSP2}). This completes the proof.
\end{proof}
\smallskip

\noindent
Algorithm~\ref{alg0} represents the procedure used in the proof of
Proposition~\ref{binary} to move from a feasible point in (\ref{BVSP}) to
a feasible point $(\bar{\m{x}},\bar{\m{y}})$ satisfying
(\ref{endpointconditions}). 

\renewcommand\figurename{Algorithm}
\begin{figure}
\begin{center}
\parbox{0cm}{\begin{tabbing}
\hspace{.4in}\=\hspace{.4in}\=\hspace{.4in}\= \kill
{\bf Input:} A binary feasible point $(\m{x}, \m{y})$ for (\ref{BVSP}) satisfying (\ref{lowerbound})\\
{\bf while} ( $\m{x}\tr\AI\m{y} > 0$ ) \\
\>{\bf if} ( $\m{1}\tr\m{x} > \ell_a$ ) \\
\>\>Choose $i$ such that $x_i = 1$ and $\AI_i\m{y} \ge 1$.\\
\>\>Set $x_i = 0$. \\
\>{\bf else if} ( $\m{1}\tr\m{y} > \ell_b$ )\\
\>\>Choose $i$ such that $y_i = 1$ and $\AI_i\m{x}\ge 1$.\\
\>\>Set $y_i = 0$.\\
\>{\bf end if}\\
{\bf end while}
\end{tabbing}}
\end{center}
\caption{Convert a binary feasible point for $(\ref{BVSP})$ into a
vertex separator without decreasing the objective function value.
\label{alg0}}
\end{figure}
\renewcommand\figurename{Fig.}

\begin{remark}
There is typically an abundance of feasible points in $(\ref{BVSP})$ satisfying
$(\ref{lowerbound})$. For example, in the common case where
$\gamma = c_i = w_i = 1$ for each $i$, $(\ref{lowerbound})$
is satisfied whenever 
$\m{x}$ and $\m{y}$ are incidence vectors for a pair of feasible sets
$\C{A}$ and $\C{B}$ in $(\ref{VSP2})$, since in this case
\[
f (\m{x},\m{y}) = \m{c}\tr(\m{x} + \m{y}) = \m{w}\tr\m{x} + \m{w}\tr\m{y}
\ge \ell_a + \ell_b = \gamma (\ell_a + \ell_b).
\]
\end{remark}

Now consider the following continuous
bilinear program, which is obtained from (\ref{BVSP}) by relaxing the
binary constraint $\m{x},\m{y} \in \{0,1\}^n$:
\begin{eqnarray}\label{CVSP}
& \displaystyle {\max_{\m{x}, \m{y} \in \mathbb{R}^n} \;\;
f (\m{x}, \m{y}) := \; \m{c}\tr(\m{x} + \m{y}) -
\gamma\m{x}\tr\AI\m{y}} & \\
& \begin{array}{cl} \mbox{subject to} &
\m{0}\le\m{x}\le\m{1}, \quad
\m{0}\le\m{y}\le\m{1}, \quad
\ell_a \le \m{w}\tr\m{x}  \le u_a,\;\; \mbox{ and }\;\;
\ell_b \le \m{w}\tr\m{y}  \le u_b.
\end{array}
\nonumber
\end{eqnarray}
In \cite{HagerHungerford14}, the authors study (\ref{CVSP}) in the common case
where $\m{c} \ge \m{0}$ and $\m{w} = \m{1}$.
In particular, the following theorem is proved:
\smallskip

\begin{theorem}{\rm\cite[Theorem 2.1, Part 1]{HagerHungerford14}}
\label{cvsp}
If $(\ref{VSP2})$ is feasible,
$\m{w} = \m{1}$, $\m{c} \ge \m{0}$,
$\gamma \ge \max \{c_i : i \in \C{V} \} > 0$, and the optimal
objective value in $(\ref{VSP2})$ is at least $\gamma (\ell_a + \ell_b)$, then
$(\ref{CVSP})$ has a binary optimal solution
$(\m{x},\m{y})\in\{0,1\}^{2n}$ satisfying $(\ref{sepcond})$.
\end{theorem}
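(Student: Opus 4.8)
The plan is to collapse the continuous program (\ref{CVSP}) onto its binary feasible points by combining bilinearity with the integrality of the weight bounds (since $\m{w}=\m{1}$, I take $\ell_a,\ell_b,u_a,u_b\in\mathbb{Z}$), and then to invoke Proposition~\ref{binary} to recover the separator condition (\ref{sepcond}). First I would record that the optimal value of (\ref{CVSP}) is at least $\gamma(\ell_a+\ell_b)$: the incidence vectors (\ref{xydef}) of an optimal solution of (\ref{VSP2}) are binary and feasible in (\ref{CVSP}) and satisfy $\m{x}\tr\AI\m{y}=0$, so $f(\m{x},\m{y})=\m{c}\tr(\m{x}+\m{y})$ equals the optimal cost of (\ref{VSP2}), which is $\ge\gamma(\ell_a+\ell_b)$ by hypothesis. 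Since $f$ is continuous on the compact feasible set, its maximum over (\ref{CVSP}) is attained and is at least $\gamma(\ell_a+\ell_b)$.

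Next I would push an optimizer to a vertex of each factor polytope. Write the feasible set as $P_a\times P_b$, where $P_a=\{\m{x}:\m{0}\le\m{x}\le\m{1},\ \ell_a\le\m{w}\tr\m{x}\le u_a\}$ and $P_b$ is defined analogously, and use that $f(\cdot,\m{y})$ is affine in $\m{x}$ while $f(\m{x},\cdot)$ is affine in $\m{y}$. Starting from any optimal $(\m{x}^*,\m{y}^*)$, the affine function $f(\cdot,\m{y}^*)$ attains its maximum over $P_a$ at a vertex $\hat{\m{x}}$, and then $f(\hat{\m{x}},\cdot)$ attains its maximum over $P_b$ at a vertex $\hat{\m{y}}$. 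Neither replacement decreases $f$, so $(\hat{\m{x}},\hat{\m{y}})$ is optimal with both components vertices of their respective polytopes.

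The heart of the argument, and the step I expect to be the main obstacle, is to show that every vertex of $P_a$ is binary. A vertex requires $n$ linearly independent active constraints; the two weight inequalities are parallel and contribute at most one, so at least $n-1$ of the bound constraints $0\le x_i\le 1$ are active, forcing all but at most one coordinate to lie in $\{0,1\}$. If some coordinate $\hat{x}_k$ were fractional, the remaining active constraint would have to be a weight equality $\m{w}\tr\hat{\m{x}}=\ell_a$ or $\m{w}\tr\hat{\m{x}}=u_a$; but then $\hat{x}_k$ equals an integer bound minus a sum of integers, hence is itself an integer, a contradiction. Thus $\hat{\m{x}}$, and likewise $\hat{\m{y}}$, is binary. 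I would stress that integrality of the bounds is indispensable here: if it fails, a single fractional coordinate pressed against a non-integer weight bound can make the continuous optimum strictly exceed the best binary value, so this is genuinely the crux rather than a routine reduction.

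Finally, $(\hat{\m{x}},\hat{\m{y}})$ is a binary optimal point of (\ref{CVSP}) with $f(\hat{\m{x}},\hat{\m{y}})\ge\gamma(\ell_a+\ell_b)$, so it satisfies hypothesis (\ref{lowerbound}). Proposition~\ref{binary} then yields a binary feasible $(\bar{\m{x}},\bar{\m{y}})$ with $f(\bar{\m{x}},\bar{\m{y}})\ge f(\hat{\m{x}},\hat{\m{y}})$ and $\bar{\m{x}}\tr\AI\bar{\m{y}}=0$. Since $f(\hat{\m{x}},\hat{\m{y}})$ is the optimal value and $(\bar{\m{x}},\bar{\m{y}})$ is feasible, the first inequality is an equality, so $(\bar{\m{x}},\bar{\m{y}})$ is a binary optimal solution of (\ref{CVSP}) satisfying (\ref{sepcond}), which completes the proof.
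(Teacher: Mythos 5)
Your proof is correct, but it reaches the conclusion by a different route than the one the paper relies on. The paper does not prove Theorem~\ref{cvsp} in situ; it cites \cite{HagerHungerford14}, where the argument is a constructive, step-by-step rounding procedure of the same flavor as the proof of Proposition~\ref{cvspcoarse} and Algorithm~\ref{alg4}: pairs of fractional components are moved along directions $\m{e}_i - \m{e}_j$ (which preserve $\m{1}\tr\m{x}$) in the sign direction that does not decrease the bilinear objective, until at most one fractional component remains, and the last one is then pushed to a bound; Algorithm~\ref{alg0} (i.e., Proposition~\ref{binary}) then produces the separator condition (\ref{sepcond}). You instead argue nonconstructively: exhibit a feasible point of (\ref{CVSP}) with value $\ge \gamma(\ell_a+\ell_b)$, push an optimizer to a vertex of $\C{P}_a\times\C{P}_b$ using bilinearity, and show that when $\m{w}=\m{1}$ and the bounds are integers every vertex of $\C{P}_a$ is binary (at least $n-1$ active box constraints on distinct coordinates, and a fractional survivor would force an integer weight equality, a contradiction). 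This is essentially the ``nonconstructive proof'' the paper sketches after Proposition~\ref{cvspcoarse}, sharpened from ``mostly binary'' to ``binary'' by the unit-weight integrality observation. What the paper's route buys is an explicit algorithm (used in its 4-step solution procedure); what yours buys is brevity and a clean identification of exactly where integrality enters. Your final appeal to Proposition~\ref{binary} to convert the binary optimum into one satisfying (\ref{sepcond}) is handled correctly, including the verification of hypothesis (\ref{lowerbound}) and the observation that optimality is preserved.

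One substantive remark: you explicitly assume $\ell_a,\ell_b,u_a,u_b\in\mathbb{Z}$, which is not stated in the theorem as reproduced here (the paper only calls them nonnegative reals). You are right that some integrality is genuinely needed --- with $\m{w}=\m{1}$ and a noninteger $u_a$, the continuous optimum can strictly exceed every binary value, so the conclusion fails --- and the paper itself tacitly uses integrality of $\ell_a$ in the proof of Proposition~\ref{binary} (``Since $\m{x}$ is binary and $\ell_a$ is an integer\dots''). So flagging the assumption, as you did, is the correct resolution rather than a gap.
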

\smallskip

In the proof of Theorem \ref{cvsp}, a step-by-step
procedure is given
for moving from any feasible point $(\m{x},\m{y})$ in (\ref{CVSP})
to a binary point $(\bar{\m{x}},\bar{\m{y}})$ satisfying
$f (\bar{\m{x}},\bar{\m{y}}) \ge f (\m{x},\m{y})$.
Thus, when the vertices all have unit weights,
the VSP may be solved with a 4-step procedure:
\begin{itemize}
\item[1.]
Obtain an optimal solution to the continuous
bilinear program (\ref{CVSP}).
\item[2.]
Move to a binary optimal solution using
the algorithm of \cite[Theorem 2.1, Part 1]{HagerHungerford14}.
\item [3.]
Convert the binary solution of (\ref{CVSP}) to a separator using
Algorithm~\ref{alg0}.
\item [4.]
Construct an optimal partition via (\ref{ABS}).
\end{itemize}

When $G$ has a small number of vertices, the dimension of the bilinear program
(\ref{CVSP})
is small, and the above approach may be very effective.
However, since the objective function in (\ref{CVSP}) is non-concave,
the number of local maximizers in (\ref{CVSP}) grows quickly
as $|\C{V}|$ becomes large and solving
the bilinear program becomes increasingly difficult.

In order to find good approximate solutions to (\ref{CVSP}) when $G$ is large,
we will incorporate the 4-step procedure (with some modifications) into
a multilevel framework (see Section \ref{sectAlgorithm}).
The basic idea is to coarsen the graph into a smaller graph having a similar
structure to the original graph; the VSP is then solved for the coarse graph
via a procedure similar to the one above,
and the solution is uncoarsened to give a solution for the original graph.

At the coarser levels of our algorithm, each vertex represents an aggregate of
vertices from the original graph. Hence, in order to keep track of the sizes of
the aggregates, weights must be assigned to the vertices in the coarse graphs,
which means the assumption of Theorem \ref{cvsp} that $\m{w} = \m{1}$ does
not hold at the coarser levels.
Indeed, when $\m{c}\ge\m{0}$ and $\m{w}\neq\m{1}$, (\ref{CVSP}) may not have
a binary optimal solution, as we will show. However, in the general case
where $\m{w} > \m{0}$ and $\m{c}\in\mathbb{R}^n$, the following weaker result
is obtained: 
\smallskip

\begin{definition}
A point $(\m{x},\m{y}) \in \mathbb{R}^{2n}$
is called \emph{mostly binary} if $\m{x}$ and $\m{y}$ each have at most one
non-binary component.
\end{definition}
\smallskip

\begin{proposition}\label{cvspcoarse}
If the VSP {\rm (\ref{VSP2})} is feasible
and $\gamma \in \mathbb{R}$,
then $(\ref{CVSP})$
has a mostly binary optimal solution.
\end{proposition}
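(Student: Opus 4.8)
The plan is to exploit the bilinear structure of the objective $f$ together with the polyhedral structure of the feasible region. The feasible set of (\ref{CVSP}), defined by the box constraints $\m{0} \le \m{x} \le \m{1}$, $\m{0} \le \m{y} \le \m{1}$ and the two weight constraints, is a compact polytope, and it is nonempty because feasibility of (\ref{VSP2}) furnishes binary incidence vectors satisfying every constraint of (\ref{CVSP}). By continuity of $f$ and compactness, a maximizer $(\m{x}^*, \m{y}^*)$ exists.

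Next I would fix $\m{x} = \m{x}^*$ and observe that $f(\m{x}^*, \cdot)$ is an affine function of $\m{y}$, while the set of feasible $\m{y}$ is the polytope
\[
P_y = \{\m{y} \in \mathbb{R}^n : \m{0} \le \m{y} \le \m{1}, \; \ell_b \le \m{w}\tr\m{y} \le u_b\},
\]
which is nonempty since $\m{y}^* \in P_y$. Because an affine function attains its maximum over a nonempty compact polytope at a vertex, there is a vertex $\bar{\m{y}}$ of $P_y$ with $f(\m{x}^*, \bar{\m{y}}) \ge f(\m{x}^*, \m{y}^*)$; optimality of $(\m{x}^*, \m{y}^*)$ forces equality, so $(\m{x}^*, \bar{\m{y}})$ is optimal. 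Fixing $\m{y} = \bar{\m{y}}$ and repeating the argument in the $\m{x}$ variable produces a vertex $\bar{\m{x}}$ of the analogous polytope $P_x$, so that $(\bar{\m{x}}, \bar{\m{y}})$ is optimal.

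It then remains to show that every vertex of $P_y$ (and likewise $P_x$) is mostly binary, which is the crux of the argument. At a vertex of $P_y$ in $\mathbb{R}^n$, the gradients of the active constraints must span $\mathbb{R}^n$. The only constraints are the $2n$ box constraints, with gradients $\pm\m{e}_i$, and the two weight constraints, whose gradients $\pm\m{w}$ are negatives of one another and hence contribute rank at most one together. If $k$ components of $\m{y}$ are strictly between $0$ and $1$, those coordinates carry no active box constraint, so the active box constraints span only an $(n-k)$-dimensional coordinate subspace; adjoining the weight gradient raises the rank to at most $n - k + 1$. Demanding rank $n$ forces $k \le 1$, so $\m{y}$ has at most one non-binary component.

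Combining these steps, $\bar{\m{x}}$ and $\bar{\m{y}}$ each have at most one non-binary component, so $(\bar{\m{x}}, \bar{\m{y}})$ is a mostly binary optimal solution. I expect the main obstacle to be stating the active-set/rank argument cleanly — in particular handling the degenerate case $\ell_b = u_b$, where both weight inequalities are simultaneously active yet still contribute only rank one — whereas the existence and vertex-optimality reductions are routine once compactness and bilinearity are in hand.
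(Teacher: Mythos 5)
Your proof is correct, but it takes the nonconstructive route rather than the constructive one the paper uses as its official proof. The paper proves a stronger property (P): from \emph{any} feasible point of (\ref{CVSP}) there is a piecewise linear path, built from the weight-preserving directions $\frac{1}{w_k}\m{e}_k - \frac{1}{w_l}\m{e}_l$, along which the objective does not decrease and which terminates at a mostly binary feasible point. That constructive path is what the authors package as Algorithm~\ref{alg4} and then use in the refinement phase of the multilevel scheme, so for them the proof technique is itself an algorithmic deliverable. Your argument --- existence of a maximizer by compactness, a two-stage reduction using affineness of $f$ in each block to move first $\m{y}$ and then $\m{x}$ to vertices of the separate polytopes $P_y$ and $P_x$ (exploiting that the feasible set is a product), and then a rank count on active constraint gradients showing that a vertex of $P_y$ can have at most one fractional coordinate --- establishes exactly the stated proposition, and your handling of the degenerate case $\ell_b = u_b$ (the two weight gradients $\pm\m{w}$ jointly contributing rank at most one) is the right precaution. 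The paper in fact sketches essentially this alternative immediately after its constructive proof, citing Konno for the existence of an optimal extreme point of the full $2n$-dimensional feasible set and counting active constraints; your version replaces that citation with the elementary block-by-block vertex argument, which is self-contained and arguably cleaner, at the cost of yielding no algorithm for reaching the mostly binary point from a given iterate.
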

\smallskip

\begin{proof}
We show that the following stronger property holds:
\smallskip
\begin{itemize}
\item[(P)]
For any $(\m{x}, \m{y})$ feasible in (\ref{CVSP}), there exists a piecewise
linear path to a feasible point $(\bar{\m{x}}, \bar{\m{y}}) \in \mathbb{R}^{2n}$
which is mostly binary and satisfies
$f(\bar{\m{x}}, \bar{\m{y}}) \ge f(\m{x}, \m{y})$.
\end{itemize}
\smallskip

Let $(\m{x},\m{y})$ be any feasible point of (\ref{CVSP}).
If $\m{x}$ and $\m{y}$ each have at most one non-binary component,
then we are done.
Otherwise, assume without loss of generality there exist indices
$k \neq l$ such that
\[
0 < x_k \le x_l < 1.
\]
Since $\m{w} > \m{0}$, we can define
\[
\m{x}(t) := \m{x} + t\left(\frac{1}{w_k}\m{e}_k - \frac{1}{w_l}\m{e}_l\right)
\]
for $t\in\mathbb{R}$.
Substituting $\m{x} = \m{x}(t)$ in the objective function yields
\[
f(\m{x}(t), \m{y}) = f(\m{x}, \m{y}) + t d, \quad
\mbox{where }
d = \nabla_{\m{x}} f(\m{x},\m{y})
\left(
\frac{1}{w_k}\m{e}_k - \frac{1}{w_l}\m{e}_l \right) .
\]
If $d \ge 0$, then we may increase $t$ from zero until either $x_k(t) = 1$ or
$x_l(t) = 0$.
In the case where $d < 0$, we may decrease $t$ until
either $x_k(t) = 0$ or $x_l(t) = 1$.
In either case, the number of non-binary
components in $\m{x}$ is reduced by at least one,
while the objective value does not decrease by the choice
of the sign of $t$.
Feasibility is maintained since $\m{w}\tr\m{x}(t) = \m{w}\tr\m{x}$.
We may continue moving components to bounds in this manner
until $\m{x}$ has at most one non-binary component.
The same procedure may be applied to $\m{y}$.
In this way, we will arrive at a feasible point
$(\bar{\m{x}},\bar{\m{y}})$
such that $\bar{\m{x}}$ and $\bar{\m{y}}$ each have at most one non-binary
component and $f(\bar{\m{x}},\bar{\m{y}})\ge f(\m{x},\m{y})$.
This proves (P), which completes the proof.
\end{proof}
\smallskip

The proof of Proposition~\ref{cvspcoarse} was constructive.
A nonconstructive proof goes as follows: Since the
quadratic program (\ref{CVSP}) is bilinear, there exists an
optimal solution lying at an extreme point \cite{Konno76}.
At an extreme point of the feasible set of (\ref{CVSP}),
exactly $2n$ linearly independent constraints are active.
Since there can be at most $n$ linearly independent constraints which are
active at $\m{x}$, and similarly for $\m{y}$, there must exist exactly
$n$ linearly independent constraints which are active at $\m{x}$; in particular,
at least $n - 1$ components of $\m{x}$ must lie at a bound, and similarly for
$\m{y}$. Therefore, $(\m{x},\m{y})$ is mostly binary.
In the case where
$\m{w} \neq \m{1}$, there may exist extreme points of the feasible set
which are not binary; for example, consider $n=3$, $\ell_a = \ell_b = 1$,
$u_a = u_b = 2$, $\m{w} = (1,1,2)$, $\m{x} = (1, 0, 0.5)$, and
$\m{y} = (0, 1, 0.5)$.

Often, the conclusion of Proposition~\ref{cvspcoarse} can be further
strengthened to assert the existence of a solution $(\m{x}, \m{y})$
of (\ref{CVSP}) for which either
$\m{x}$ or $\m{y}$ is completely binary, while the other variable has at most
one nonbinary component.
The rationale is the following:
Suppose that $(\m{x},\m{y})$ is
a mostly binary optimal solution
and without loss of generality
$x_i$ is a nonbinary component of $\m{x}$.
Substituting $\m{x}(t) = \m{x} + t\m{e}_i$ in the objective function
we obtain
\[
f(\m{x}(t), \m{y}) = f(\m{x}, \m{y}) + td, \quad
d = \nabla_{\m{x}} f(\m{x}, \m{y})\m{e}_i.
\]
If $d \ge 0$, we increase $t$, while if $d < 0$, we decrease $t$;
in either case, the objective function $f(\m{x}(t), \m{y})$ cannot decrease.
If
\begin{equation}\label{upperlowerbounds}
\ell_a + w_i \le \m{w}\tr\m{x} \le u_a - w_i,
\end{equation}
then we can let $t$ grow in magnitude until either $x_i(t) = 0$ or
$x_i(t) = 1$, while complying with the bounds
$\ell_a \le \m{w}\tr\m{x}(t) \le u_a$.
In applications, either the inequality (\ref{upperlowerbounds}) holds,
or an analogous inequality 
$\ell_b + w_j \le \m{w}\tr\m{y} \le u_a - w_j$ holds for $\m{y}$, where
$y_j$ is a nonbinary component of $\m{y}$.
The reason that one of these inequalities holds is that
we typically have $u_a = u_b > \C{W}(\C{V})/2$, which implies
that the upper bounds $\m{w}\tr\m{x} \le u_a$ and
$\m{w}\tr\m{y} \le u_b$ cannot be simultaneously active.
On the other hand, the lower bounds 
$\m{w}\tr\m{x} \ge \ell_a$ and
$\m{w}\tr\m{y} \ge \ell_b$ are often trivially satisfied when $\ell_a$ and
$\ell_b$ are small numbers like one.

Algorithm~\ref{alg4} represents the procedure used in the proof
of Proposition~\ref{cvspcoarse} to convert a given feasible point
for (\ref{CVSP}) into a mostly binary
feasible point without decreasing the objective function value.
In the case where $\m{w} = \m{1}$,
the final point returned by Algorithm \ref{alg4} is binary.
\renewcommand\figurename{Algorithm}
\begin{figure}
\begin{center}
\parbox{0cm}{\begin{tabbing}
\hspace{.4in}\=\hspace{.4in}\=\hspace{.4in}\= \kill
{\bf Input:} A feasible point $(\m{x}, \m{y})$ for the continuous bilinear
program (\ref{CVSP}). \\
{\bf while} ( $\m{x}$ has at least $2$ nonbinary components ) \\
\>Choose $i,j \in \C{V}$ such that $x_i, x_j \in (0,1)$. \\
\>Update $\m{x} \; \leftarrow \; \m{x} +
t (\frac{1}{w_i} \m{e}_i - \frac{1}{w_j} \m{e}_j)$, choosing $t$ to ensure that:
\\
\>\>(a) $f (\m{x}, \m{y})$ does not decrease, \\
\>\>(b) either $x_i \in \{0,1\}$ or $x_j \in \{0, 1\}$,\\
\>\>(c) $\m{x}$ feasible in (\ref{CVSP}). \\
{\bf end while} \\
{\bf while} ( $\m{y}$ has at least $2$ nonbinary components ) \\
\>Choose $i,j \in \C{V}$ such that $y_i, y_j \in (0,1)$. \\
\>Update $\m{y} \; \leftarrow \; \m{y} +
t (\frac{1}{w_i} \m{e}_i - \frac{1}{w_j} \m{e}_j)$,
choosing $t$ to ensure that: \\
\>\>(a) $f (\m{x}, \m{y})$ does not decrease,\\
\>\>(b) either $y_i \in \{0,1\}$ or $y_j \in \{0, 1\}$, \\
\>\>(c) $\m{y}$ feasible in (\ref{CVSP}). \\
{\bf end while}
\end{tabbing}}
\end{center}
\caption{Convert a feasible point for {\rm (\ref{CVSP})} into a mostly binary
feasible point without decreasing the objective value.
\label{alg4}}
\end{figure}
\renewcommand\figurename{Fig.}
Although the continuous bilinear problem (\ref{CVSP}) is not necessarily
equivalent
to the discrete VSP (\ref{VSP2}) when $\m{w} \ne \m{1}$,
it closely approximates (\ref{VSP2}) in the sense it has
a mostly binary optimal solution.
Since (\ref{CVSP}) is a relaxation of (\ref{preBVSP}), the objective value
at an optimal solution to (\ref{CVSP}) gives an
upper bound on the optimal objective value in (\ref{preBVSP}), and therefore
on the optimal objective value
in (\ref{VSP2}). On the other hand, given a mostly binary solution to
(\ref{CVSP}),
we can typically push the remaining
fractional components to bounds without violating the constraints on 
$\m{w}\tr\m{x}$ and $\m{w}\tr\m{y}$. Then we apply Algorithm~\ref{alg0} to
this binary point to obtain a feasible point in (\ref{preBVSP}), giving
a \emph{lower} bound on the optimal objective value in (\ref{preBVSP}) and
(\ref{VSP2}). In the case where $\m{w} = \m{1}$, the upper and lower bounds
are equal.

\section{Solving the bilinear program}
\label{sectSolve}
In our implementation of the multilevel algorithm, we use an iterative
optimization
algorithm to compute a local maximizer of
(\ref{CVSP}), then employ two different techniques to escape from a
local optimum.
Our optimization algorithm is a modified version of a
Mountain Climbing Algorithm originally proposed by
Konno \cite{Konno76} for solving bilinear programs.

\subsection{Mountain Climbing}
Given an initial guess,
the Mountain Climbing Algorithm of \cite{Konno76} solves
(\ref{CVSP}) by alternately holding
$\m{x}$ or $\m{y}$ fixed while
optimizing over the other variable.
This optimization problem for a single variable can be done efficiently
since the program is linear in $\m{x}$ and in $\m{y}$.
In our modified version of the mountain climbing algorithm,
which we call MCA (see Algorithm~\ref{alg1}),
we maximize over $\m{x}$ with $\m{y}$ held fixed to obtain
$\hat{\m{x}}$, we maximize over $\m{y}$ with $\m{x}$ held
fixed to obtain $\hat{\m{y}}$,
and then we maximize over the subspace spanned by the two maximizers.
Due to the bilinear structure of the objective function, the subspace
maximum is either
$f(\hat{\m{x}}, \hat{\m{y}})$, $f(\hat{\m{x}}, \m{y})$,
or $f(\m{x}, \hat{\m{y}})$.
The step $(\hat{\m{x}},\hat{\m{y}})$ is only taken if it provides an improvement
of at least $\eta$ more than either an $\hat{\m{x}}$ or $\hat{\m{y}}$
step, where $\eta$
is a small constant ($10^{-5}$ in our experiments). After an $\m{x}$ or $\m{y}$
step is taken, the subspace maximizer alternates between
$(\hat{\m{x}}, \m{y})$ and $(\m{x},\hat{\m{y}})$,
and hence only one linear program is solved at each iteration.
In our statement of MCA, $\C{P}_{a}$ and $\C{P}_{b}$
denote the polyhedral feasible sets for (\ref{CVSP}) defined by
\[
\C{P}_{i} = \{ \m{z} \in \mathbb{R}^n : \m{0} \le \m{z} \le \m{1} \;\;
\mbox{and} \;\; \ell_i \le \m{w}\tr\m{z} \le u_i \}, \quad i = a, b .
\]
%
\renewcommand\figurename{Algorithm}
\begin{figure}[t]
\begin{center}
\parbox{0cm}{\begin{tabbing}
\hspace{.4in}\=\hspace{.4in}\=\hspace{.4in}\=\hspace{.4in} \kill
{\bf Input:} A feasible point $({\m{x}}, {\m{y}})$ for (\ref{CVSP}) and
$\eta > 0$.\\
{\bf while} ( $(\m{x}, \m{y})$ not stationary point for (\ref{CVSP}) ) \\
\>$\hat{\m{x}} \; \leftarrow \; \mbox{argmax} \;
\{ f(\m{x}, {\m{y}} ) : \m{x} \in \C{P}_a\}$\\
\>$\hat{\m{y}} \; \leftarrow \; \mbox{argmax} \;
\{ f({\m{x}}, \m{y} ) : \m{y} \in \C{P}_b \}$ \\
\>{\bf if} ( $f (\hat{\m{x}}, \hat{\m{y}}) > \max \; \{ f (\hat{\m{x}},
{\m{y}}), f({\m{x}}, \hat{\m{y}}) \} + \eta$ ) \\
\>\>$({\m{x}}, {\m{y}}) \; \leftarrow \; (\hat{\m{x}}, \hat{\m{y}})$\\
\>{\bf else if} ( $f (\hat{\m{x}}, {\m{y}}) > f ({\m{x}}, \hat{\m{y}})$ )\\
\>\>${\m{x}} \; \leftarrow \; \hat{\m{x}}$\\
\>{\bf else} \\
\>\>$\m{y} \; \leftarrow \; \hat{\m{y}}$\\
\>{\bf end if}\\
{\bf end while}\\
{\bf return} $(\m{x},\m{y})$
\end{tabbing}}
\end{center}
\caption{{\bf MCA:}
A modified version of Konno's Mountain Climbing Algorithm for
generating a stationary point for
$(\ref{CVSP})$.
\label{alg1}}
\end{figure}
\renewcommand\figurename{Fig.}

The linear programs arising in MCA are solved using a greedy algorithm.
In particular, $\max\;\{f(\m{x},\m{y}) : \m{x}\in\C{P}_a\}$ is solved by
setting all components $x_i$ equal to zero and then
sorting the components in decreasing order of their ratio
$r_i := \frac{\partial f}{\partial x_i}(\m{x},\m{y})/{w_i}$;
the components are then visited in
order and are pushed up from $0$ to $1$ until
either $\m{w}\tr\m{x} = u_a$ or a component is reached such that $r_i < 0$.
It is easy to show that this procedure gives an optimal solution to the LP.


Since (\ref{CVSP}) is non-concave, it may have many stationary points;
thus, it is crucial to incorporate techniques to escape
from a stationary point and explore a new part of the solution space.
The techniques we develop are based on perturbations in
either the cost vector $\m{c}$ or in the penalty parameter $\gamma$.
We make the smallest possible perturbations which guarantee that the
current iterate is no longer a stationary point of the perturbed problem.
After computing an approximate solution of the perturbed problem, we use
it as a starting guess in the original problem and reapply MCA.
If we reach a better solution for the original problem, then we save this best
solution, and make another perturbation.
If we do not reach a better solution, then we continue the perturbation
process, starting from the new point.

\subsection{$\m{c}$-perturbations}
Our perturbations of the cost vector are based on an analysis of the
first-order optimality conditions for the maximization problem (\ref{CVSP}).
If a feasible point $(\m{x}, \m{y})$ for (\ref{CVSP}) is a local maximizer,
then the objective function can only decrease when we make small
moves in the direction
of other feasible points, or equivalently,
\begin{equation}\label{1}
\nabla_{\m{x}} f(\m{x}, \m{y}) (\tilde{\m{x}} - \m{x}) +
\nabla_{\m{y}} f(\m{x}, \m{y}) (\tilde{\m{y}} - \m{y}) \le 0
\end{equation}
whenever $(\tilde{\m{x}}, \tilde{\m{y}})$ is feasible in (\ref{CVSP}).
Another way to state the first-order optimality condition (\ref{1}) employs
multipliers for the constraints.
In particular, by \cite[Theorem 12.1]{NocedalWright2006},
a feasible point $(\m{x}, \m{y})$ for (\ref{CVSP}) is a local maximizer only
if there exist multipliers
$\g{\mu}^a \in \C{M}(\m{x})$,
$\g{\mu}^b \in \C{M}(\m{y})$,
$\lambda^a \in \C{L}(\m{x}, \ell_a, u_a)$,
$\lambda^b \in \C{L}(\m{y}, \ell_b, u_b)$
such that
\begin{equation}\label{sep-1st}
\left[
\begin{array}{c}
\nabla_{\m{x}} f(\m{x},\m{y})\\
\nabla_{\m{y}} f(\m{x},\m{y})
\end{array}
\right]
+
\left[
\begin{array}{c}
\g{\mu}^a\\
\g{\mu}^b
\end{array}
\right]
+
\left[
\begin{array}{c}
\lambda^a\m{w}\\
\lambda^b\m{w}
\end{array}
\right]
= \m{0} ,
\end{equation}
where
\begin{eqnarray*}
\C{M}(\m{z}) &=& \{\g{\mu}\in\mathbb{R}^n :  \mu_i z_i \le \min\{\mu_i,0\}
\;\mbox{ for all } \; 1 \le i \le n\} \quad \mbox{and}\\
\C{L}(\m{z},\ell,u) &=& \{\lambda\in\mathbb{R} : \lambda\m{w}\tr\m{z} \le
\min\{\lambda u,\lambda \ell\}\}.
\end{eqnarray*}
The conditions (\ref{1}) and (\ref{sep-1st}) are equivalent.
The condition (\ref{sep-1st}) is often called the KKT (Karush-Kuhn-Tucker)
condition.
The usual formulation of the KKT conditions involves introducing
a separate multiplier for each upper and lower bound constraint, which leads to
eight different multipliers in the case of (\ref{CVSP}).
In (\ref{sep-1st}) the number of multipliers has been reduced to four through
the use of the set $\C{M}$ and $\C{L}$.

In describing our perturbations to the objective function,
we attach a subscript to $f$ to indicate the parameter that is being perturbed.
Thus $f_{\m{c}}$ denotes the original objective in (\ref{CVSP}),
while $f_{\tilde{\m{c}}}$ corresponds to the objective obtained by replacing
$\m{c}$ by $\tilde{\m{c}}$.
\smallskip

\begin{proposition}\label{prop_cperts}
If $\gamma \in \mathbb{R}$
and $(\m{x}, \m{y})$ satisfies the first-order optimality condition
{\rm (\ref{sep-1st})} for $(\ref{CVSP})$,
then in any of the following cases, for any choice of $\epsilon > 0$
and for the
indicated choices of $\tilde{\m{c}}$, $(\m{x}, \m{y})$
does not satisfy the first-order optimality condition
{\rm (\ref{sep-1st})} for $f = f_{\tilde{\m{c}}}$:
\begin{itemize}
\item[{\rm 1.}]
For any $i \ne j$ such that $\mu_i^a = \mu_j^a = 0$, $x_i < 1$,
and $x_j > 0$, take
\[
\tilde{c}_k = \left\{
\begin{array}{cl}
c_k + \epsilon & \mbox{if } k = i, \\
c_k - \epsilon & \mbox{if } k = j, \\
c_k            & \mbox{otherwise} 
\end{array} \right. .
\]
\item[{\rm 2.}]
If $\lambda^a = 0$ and $\m{w}\tr \m{x} < u_a$, then for any
$i$ such that $\mu_i^a = 0$ and $x_i < 1$, take
$\tilde{c}_i = c_i + \epsilon$
and $\tilde{c}_k = c_k$ for $k \ne i$.
\item[{\rm 3.}]
If $\lambda^a = 0$ and $\m{w}\tr \m{x} > \ell_a$, then for any
$j$ such that $\mu_j^a = 0$ and $x_j > 0$, take
$\tilde{c}_j = c_j - \epsilon$ and
$\tilde{c}_k = c_k$ for $k \ne j$.
\end{itemize}
\end{proposition}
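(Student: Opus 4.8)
The plan is to work with the geometric form $(\ref{1})$ of stationarity rather than the multiplier form $(\ref{sep-1st})$ directly; since the two are equivalent (as noted just after $(\ref{sep-1st})$), it suffices to produce, in each case, a single feasible point at which the linearization of $f_{\tilde{\m{c}}}$ strictly increases. Because the feasible set of $(\ref{CVSP})$ is the product $\C{P}_a \times \C{P}_b$ and the gradient splits into an $\m{x}$-block and a $\m{y}$-block, the inequality $(\ref{1})$ decouples: it fails as soon as I exhibit a direction $\m{d}\in\mathbb{R}^n$ and a scalar $t>0$ with $\tilde{\m{x}} := \m{x}+t\m{d}\in\C{P}_a$ and $\nablax f_{\tilde{\m{c}}}(\m{x},\m{y})\m{d} > 0$, while keeping $\m{y}$ fixed so the $\m{y}$-block of $(\ref{1})$ contributes zero. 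This remark is what makes the perturbation tractable: although changing $c_i$ shifts both $\nablax f$ and $\nablay f$ by $(\tilde{\m{c}}-\m{c})\tr$, freezing $\m{y}$ renders the $\m{y}$-block irrelevant, so I only track the $\m{x}$-block, where $\nablax f_{\tilde{\m{c}}}(\m{x},\m{y}) = \nablax f_{\m{c}}(\m{x},\m{y}) + (\tilde{\m{c}}-\m{c})\tr$.

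For each case I would then fix the direction, verify $\tilde{\m{x}}\in\C{P}_a$ for small $t>0$, and evaluate the directional derivative. For Case~1 take $\m{d} = \frac{1}{w_i}\m{e}_i - \frac{1}{w_j}\m{e}_j$; for Cases~2 and~3 take $\m{d} = \m{e}_i$ and $\m{d} = -\m{e}_j$, respectively. Feasibility follows from the hypotheses: $x_i<1$ and $x_j>0$ keep the box constraints satisfied for small $t>0$; the weight-preserving direction of Case~1 satisfies $\m{w}\tr\m{d} = 0$, so $\m{w}\tr\tilde{\m{x}} = \m{w}\tr\m{x}$; and in Cases~2 and~3 the strict slacks $\m{w}\tr\m{x} < u_a$ and $\m{w}\tr\m{x} > \ell_a$ leave room to move the single weight $\m{w}\tr\tilde{\m{x}}$ without violating its bound. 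Since $f$ is linear in $\m{x}$ for fixed $\m{y}$, the slope $\nablax f_{\tilde{\m{c}}}(\m{x},\m{y})\m{d}$ is constant along the segment, so a strictly positive slope yields a strictly larger value at $\tilde{\m{x}}$ for every admissible $t$.

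The heart of the argument is the value of $\nablax f_{\m{c}}(\m{x},\m{y})\m{d}$, which I read off from the multipliers certifying $(\ref{sep-1st})$: componentwise they give $\partial f_{\m{c}}/\partial x_k = -\mu_k^a - \lambda^a w_k$. In Case~1 the hypotheses $\mu_i^a = \mu_j^a = 0$ give $\partial f_{\m{c}}/\partial x_i = -\lambda^a w_i$ and $\partial f_{\m{c}}/\partial x_j = -\lambda^a w_j$, so $\nablax f_{\m{c}}(\m{x},\m{y})\m{d} = \frac{1}{w_i}(-\lambda^a w_i) - \frac{1}{w_j}(-\lambda^a w_j) = 0$; crucially $\lambda^a$ cancels, which is exactly why Case~1 needs no assumption on $\lambda^a$. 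In Cases~2 and~3 the single-coordinate directions are not weight-preserving, so I instead invoke $\lambda^a = 0$ together with $\mu_i^a = 0$ (resp.\ $\mu_j^a = 0$) to get $\nablax f_{\m{c}}(\m{x},\m{y})\m{d} = 0$ again. Adding the perturbation contribution $(\tilde{\m{c}}-\m{c})\tr\m{d}$, which equals $\epsilon(\frac{1}{w_i}+\frac{1}{w_j})$ in Case~1 and $\epsilon$ in Cases~2 and~3, yields a strictly positive slope in every case, so $(\ref{1})$, and hence $(\ref{sep-1st})$, fails for $f_{\tilde{\m{c}}}$. I expect the only delicate points to be bookkeeping: extracting the gradient values correctly from the sign conventions encoded in $\C{M}$ and $\C{L}$, and confirming the feasibility of each $\tilde{\m{x}}$; the conceptual step that unlocks the proof is recognizing that holding $\m{y}$ fixed removes the $\m{y}$-block entirely and reduces the whole claim to a one-line directional-derivative computation.
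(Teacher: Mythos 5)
Your proposal is correct and follows essentially the same route as the paper: in each case it exhibits the same perturbation direction (your Case~1 direction $\frac{1}{w_i}\m{e}_i-\frac{1}{w_j}\m{e}_j$ is a positive multiple of the paper's $w_j\m{e}_i-w_i\m{e}_j$), verifies feasibility of $\m{x}+t\m{d}$ for small $t>0$ from the same hypotheses, uses (\ref{sep-1st}) to show the unperturbed directional derivative vanishes, and concludes that the $\epsilon$-shift makes it strictly positive, violating (\ref{1}). The only cosmetic difference is that you spell out the reduction to the $\m{x}$-block and the scaling of the direction slightly more explicitly.
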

\smallskip

\begin{proof}
{\bf Part 1.}
Let $i$ and $j$ satisfy the stated conditions and
define the vector $\m{d} = w_j \m{e}_i - w_i \m{e}_j$.
Since $\m{w}\tr \m{d} = 0$,
$x_i < 1$, and $x_j > 0$,
it follows that $\tilde{\m{x}}(t) = \m{x} + t\m{d}$ is feasible in
(\ref{CVSP}) for $t > 0$ sufficiently small.
Moreover, by {\rm (\ref{sep-1st})}
and the assumption $\mu_i^a = \mu_j^a = 0$,
we have $\nabla_{\m{x}} f_{\m{c}} (\m{x}, \m{y}) \m{d} = 0$.
Since
\[
\nabla_{\m{x}} f_{\tilde{\m{c}}} (\m{x}, \m{y}) =
\nabla_{\m{x}} f_{{\m{c}}} (\m{x}, \m{y}) + \epsilon (\m{e}_i\tr - \m{e}_j\tr),
\]
we conclude that
$\nabla_{\m{x}} f_{\tilde{\m{c}}} (\m{x}, \m{y})\m{d} = \epsilon (w_i + w_j)$,
which implies that
\begin{equation}\label{eh}
\nabla_{\m{x}} f_{\tilde{\m{c}}} (\m{x}, \m{y})(\tilde{\m{x}}(t) - \m{x}) =
\epsilon t (w_i + w_j) > 0.
\end{equation}
This shows that the first-order optimality condition (\ref{1}) is
not satisfied at $(\m{x}, \m{y})$ for $f = f_{\tilde{\m{c}}}$.

{\bf Part 2.}
Define $\m{d} = \m{e}_i$.
Since $\m{w}\tr\m{x} < u^a$ and $x_i < 1$, it follows that
$\tilde{\m{x}}(t) = \m{x} + t\m{d}$ is feasible
in (\ref{CVSP}) for $t > 0$ sufficiently small.
Since $\lambda^a = \mu_i^a = 0$,
{\rm (\ref{sep-1st})}
implies that
$\nabla_{\m{x}} f_{\m{c}} (\m{x}, \m{y}) \m{d} = 0$.
So, 
\[
\nabla_{\m{x}} f_{\tilde{\m{c}}} (\m{x},\m{y})\m{d} =
[\nabla_{\m{x}} f_{\m{c}}(\m{x}, \m{y}) + \epsilon\m{e}_i\tr] \m{d}
= 0 + \epsilon \m{e}_i\tr\m{d} = \epsilon,
\]
which implies $\nabla_{\m{x}} f_{\tilde{\m{c}}} (\m{x}, \m{y})
(\tilde{\m{x}} (t) - \m{x}) = \epsilon t > 0$.
Hence, the first-order optimality conditions (\ref{1}) are
not satisfied at $(\m{x}, \m{y})$ for $f_{\tilde{\m{c}}}$.

{\bf Part 3.}
The analysis parallels the analysis of Part 2.
\end{proof}
\smallskip

Of course, Proposition \ref{prop_cperts} may be applied to either $\m{x}$ or
$\m{y}$.
Since $\epsilon$ was arbitrary,
we usually take $\epsilon$ to be a tiny positive number ($10^{-6}$ in our
experiments).
By making a tiny change in the problem, the
iterates of the optimization algorithm MCA applied to $f = f_{\tilde{\m{c}}}$
must move away from the current
point to a new vertex of the feasible set to improve the objective value
in the perturbed problem.
Then the solution of the slightly perturbed problem is used as a starting
guess for the solution of the original unperturbed problem.
Algorithm~\ref{alg2}, also denoted MCA{\underscore}CP, incorporates the
$\m{c}$-perturbations into MCA. In the figure, the notation
MCA $(\m{x},\m{y},\tilde{\m{c}})$
indicates that the MCA algorithm is applied to the point $(\m{x},\m{y})$ using
$\tilde{\m{c}}$ in place of $\m{c}$ as the vector of vertex costs.
In our experiments,
the c-perturbations were performed in the following
way:
For each $i$ such that
$|{\mu}^a_i| < 10^{-5}$, we set $\tilde{c}_i = c_i + \epsilon$ whenever
$x_i < 0.5$ and $\tilde{c}_i = c_i - \epsilon$ otherwise; similar perturbations
are made based on the values of $\mu^b_i$ and $y_i$.

\renewcommand\figurename{Algorithm}
\begin{figure}[t]
\begin{center}
\parbox{0cm}{
\begin{tabbing}
\hspace{.4in}\=\hspace{.4in}\=\hspace{.4in}\=\hspace{.4in} \kill
{\bf Input:} A feasible point $({\m{x}}, {\m{y}})$ for (\ref{CVSP}).\\
$({\m{x}}, {\m{y}})\; \leftarrow\; \mbox{MCA}\; ({\m{x}}, {\m{y}})$\\
{\bf loop}\\
\>$\tilde{\m{c}}\;\leftarrow\;\mbox{perturb}\; (\m{c})$\\
\>$(\tilde{{\m{x}}}, \tilde{{\m{y}}})\;\leftarrow\;
\mbox{MCA}\;({\m{x}}, {\m{y}}, \tilde{\m{c}})$\\
\>$({\m{x}}^*, {\m{y}}^*)\;\leftarrow\;
\mbox{MCA}\;(\tilde{\m{x}}, \tilde{\m{y}}, \m{c})$\\
\>{\bf if} ( $f ({\m{x}}^*, {\m{y}}^*) > f ({\m{x}}, {\m{y}})$ ) \\
\>\>$({\m{x}}, {\m{y}})\; \leftarrow\; ({\m{x}}^*, {\m{y}}^*)$\\
\>{\bf else}\\
\>\>{\bf break}\\ 
\>{\bf end if}\\
{\bf end loop}\\
{\bf return} $(\m{x},\m{y})$
\end{tabbing}}
\end{center}
\caption{{\bf MCA{\underscore}CP:}
A modification of MCA which incorporates $\m{c}$-perturbations.
\label{alg2}}
\end{figure}
\renewcommand\figurename{Fig.}

\subsection{$\gamma$-perturbations}
Next, we consider perturbations in the parameter $\gamma$.
According to our theory, we need to take
$\gamma \ge \max\{c_i : i \in \C{V}\}$ to ensure an (approximate) equivalence
between the discrete (\ref{VSP2}) and the continuous (\ref{CVSP}) VSP.
The penalty term $-\gamma \m{x}\tr \AI\m{y}$ in the objective
function of (\ref{CVSP}) enforces the constraints
$\C{A} \cap \C{B} = \emptyset$ and
$(\C{A}\times \C{B}) \cap \C{E} = \emptyset$ of (\ref{VSP2}).
Thus, by decreasing $\gamma$, we relax our enforcement of these constraints
and place greater emphasis on the cost of the separator.
The next proposition will determine the amount by which we must
decrease $\gamma$ in order to ensure that the current point $(\m{x}, \m{y})$,
a local maximizer of $f_\gamma$, is no longer a local maximizer
of the perturbed problem $f_{\tilde{\gamma}}$.
The derivation requires a formulation of the second-order necessary
and sufficient optimality conditions given in
\cite[Cor. 3.3]{HagerHungerford13}; applying these conditions to the
bilinear program (\ref{CVSP}), we have the following theorem.
\smallskip
\begin{theorem}\label{firstsecond-order}
If $\gamma \in \mathbb{R}$ and $(\m{x}, \m{y})$ is feasible in
$(\ref{CVSP})$, then
$(\m{x}, \m{y})$ is a local maximizer if and only if
the following hold:
\begin{itemize}
\item [{\rm (C1)}]
$\nabla_{\m{x}} f (\m{x}, \m{y})
\m{d} \le 0$ for every
$\m{d} \in \C{F}_a(\m{x}) \cap \C{D}$,
\item [{\rm (C2)}]
$\nabla_{\m{y}} f (\m{x}, \m{y})
\m{d} \le 0$ for every
$\m{d} \in \C{F}_b(\m{y}) \cap \C{D}$, and
\item [{\rm (C3)}]
$\m{d}_1\tr(\nabla^2 f)\m{d}_2 \le 0$ for every
$\m{d}_1, \m{d}_2\in \C{C}(\m{x},\m{y})\cap\C{G}$,
\end{itemize}
where
\begin{eqnarray}
\C{F}_i (\m{z}) & = &
\left\{
\m{d} \in \mathbb{R}^n :
\begin{array}{l}
{d_j \le 0 \mbox{ for all } j
\mbox{ such that } z_j = 1}\\[.05in]
{d_j \ge 0 \mbox{ for all } j
\mbox{ such that } z_j = 0}\\[.05in]
\m{w}\tr\m{d} \le 0 \mbox{ if } \m{w}\tr\m{z} = u_i\\[.05in]
\m{w}\tr\m{d} \ge 0 \mbox{ if } \m{w}\tr\m{z} = l_i
\end{array}
\right\},
\quad i = a, b, \; \m{z} \in \mathbb{R}^n,\nonumber\\
\C{C}(\m{x},\m{y}) & = &
\left\{\m{d}\in\C{F}_a (\m{x})\times\C{F}_b(\m{y}) :
\nabla f (\m{x},\m{y}) \m{d} = 0\right\},\nonumber\\
\C{D} & = & \bigcup_{i,j = 1}^n\left\{\m{e}_i, -\m{e}_i,
{w_j}\m{e}_i - {w_i}\m{e}_j\right\},\quad \mbox{and}
\quad\C{G} \; = \; (\C{D}\times\{\m{0}\})\cup(\{\m{0}\}\times\C{D}).
\label{edge-description}
\end{eqnarray}
\end{theorem}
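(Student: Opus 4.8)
The plan is to specialize the general second-order necessary and sufficient optimality conditions of \cite[Cor. 3.3]{HagerHungerford13} to the polyhedral feasible set and bilinear objective of (\ref{CVSP}), and then to reduce the (a priori infinite) families of test directions appearing there to the finite generating sets $\C{D}$ and $\C{G}$. That general result characterizes a local maximizer by two ingredients: a first-order condition requiring $\nabla f(\m{x},\m{y})\m{d} \le 0$ for every direction $\m{d}$ in the cone of feasible directions $\C{F}_a(\m{x})\times\C{F}_b(\m{y})$, and a second-order condition requiring the Hessian form to be nonpositive on the critical cone $\C{C}(\m{x},\m{y})$. First I would record the derivatives of $f$: the (row-vector) gradients $\nablax f = \m{c}\tr - \gamma(\AI\m{y})\tr$ and $\nablay f = \m{c}\tr - \gamma(\AI\m{x})\tr$, together with the Hessian
\[
\nabla^2 f = \left[\begin{array}{cc} \m{0} & -\gamma\AI \\ -\gamma\AI & \m{0}\end{array}\right],
\]
whose diagonal blocks vanish because $f$ is bilinear; symmetry of $\AI$ is used throughout.

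For the first-order reduction I would argue that the polyhedral cones $\C{F}_a(\m{x})$ and $\C{F}_b(\m{y})$ are generated (as nonnegative combinations of extreme rays) by the elements of $\C{D}$ they contain. The defining inequalities of $\C{F}_a(\m{x})$ are the sign conditions $d_j\le 0$ (where $x_j=1$) and $d_j\ge 0$ (where $x_j=0$), together with at most one active weight inequality $\m{w}\tr\m{d}\le 0$ or $\m{w}\tr\m{d}\ge 0$. In every active-constraint configuration each extreme ray is either an admissible signed unit vector $\pm\m{e}_j$ or, when a weight inequality is active, a mass-transfer direction $w_j\m{e}_i - w_i\m{e}_j$ satisfying $\m{w}\tr\m{d}=0$; both types belong to $\C{D}$. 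Since a linear functional is nonpositive on a cone if and only if it is nonpositive on a generating set, condition (C1) (testing only over $\C{F}_a(\m{x})\cap\C{D}$) is equivalent to the first-order condition over all of $\C{F}_a(\m{x})$, and likewise (C2).

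The heart of the argument is the second-order reduction, and this is where the bilinear structure is essential. Writing $\m{d}=(\m{u},\m{v})$, the vanishing diagonal blocks give $\m{d}\tr(\nabla^2 f)\m{d} = -2\gamma\,\m{u}\tr\AI\m{v}$, so the ``quadratic'' test on the critical cone is in fact a \emph{bilinear} form in the pair $(\m{u},\m{v})$. On $\C{C}(\m{x},\m{y})$ the equality $\nabla f(\m{x},\m{y})\m{d}=0$ combined with (C1)--(C2) forces $\nablax f\,\m{u}=0$ and $\nablay f\,\m{v}=0$ separately; hence $\m{u}$ ranges over the face of $\C{F}_a(\m{x})$ cut out by $\nablax f\,\m{u}=0$, and $\m{v}$ over the corresponding face of $\C{F}_b(\m{y})$. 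A face of a polyhedral cone is generated by exactly those extreme rays lying on the active supporting hyperplane, so each face is again generated by a subset of $\C{D}$; these generators are precisely the directions in $\C{C}(\m{x},\m{y})\cap(\C{D}\times\{\m{0}\})$ and $\C{C}(\m{x},\m{y})\cap(\{\m{0}\}\times\C{D})$, i.e.\ in $\C{C}(\m{x},\m{y})\cap\C{G}$. Expanding $\m{u}=\sum_p\alpha_p\m{u}^{(p)}$ and $\m{v}=\sum_q\beta_q\m{v}^{(q)}$ with $\alpha_p,\beta_q\ge 0$ gives
\[
\m{u}\tr\AI\m{v} = \sum_{p,q}\alpha_p\beta_q\,(\m{u}^{(p)})\tr\AI\m{v}^{(q)},
\]
so the bilinear form is nonpositive for all admissible $\m{u},\m{v}$ if and only if it is nonpositive on every generator pair --- which is exactly (C3), since for $\m{d}_1=(\m{u}^{(p)},\m{0})$ and $\m{d}_2=(\m{0},\m{v}^{(q)})$ one has $\m{d}_1\tr(\nabla^2 f)\m{d}_2 = -\gamma(\m{u}^{(p)})\tr\AI\m{v}^{(q)}$, while pairs of the same type contribute $0$.

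I expect the main obstacle to be this second-order step. For a \emph{general} quadratic, nonpositivity of $\m{d}\tr\nabla^2 f\,\m{d}$ over a cone is strictly stronger than nonpositivity on generators, and the reduction to generator pairs would be false; it is only the bilinear Hessian --- the zero diagonal blocks, which decouple the test into independent nonnegative expansions in $\m{u}$ and in $\m{v}$ --- that makes the finite test valid. The remaining work is bookkeeping: confirming that $\C{D}$ contains every extreme ray of $\C{F}_a(\m{x})$ and $\C{F}_b(\m{y})$ in each active-constraint case (in particular that at most one weight constraint is active, so the swap directions $w_j\m{e}_i-w_i\m{e}_j$ suffice), and checking that passing to the critical face only discards generators rather than creating new ones.
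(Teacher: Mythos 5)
Your proposal is correct, but it is substantially more self-contained than what the paper actually does: the paper gives no proof of Theorem~\ref{firstsecond-order}, asserting only that it follows by ``applying'' \cite[Cor.~3.3]{HagerHungerford13} to (\ref{CVSP}), the sole implicit work being the observation that $\C{G}$ is a reflective edge description of the feasible polyhedron (every edge is parallel to an element of $\C{G}$). You instead start from the basic characterization of a local maximizer of a quadratic over a polyhedron (first-order nonpositivity on the feasible cone plus nonpositivity of the Hessian form on the critical cone) and carry out the finite reduction yourself: the cone-generation argument for (C1)--(C2); the complementary-slackness argument showing that the critical face of $\C{F}_a(\m{x})$ is generated by exactly those generators it contains; and the identity $\m{d}\tr(\nabla^2 f)\m{d} = -2\gamma\,\m{u}\tr\AI\m{v}$, which decouples the second-order test into independent nonnegative expansions of $\m{u}$ and $\m{v}$ and makes the pairwise generator test (C3) both necessary and sufficient. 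Your emphasis on that last point is exactly right --- it is the step that would fail for a general quadratic, and the paper hides it entirely inside the cited corollary. The items you defer as ``bookkeeping'' (that $\C{F}_a(\m{x})\cap\C{D}$ generates $\C{F}_a(\m{x})$ in every active-constraint configuration, including the degenerate case $\ell_a = u_a$ where both weight inequalities are active and only the swap directions $w_j\m{e}_i - w_i\m{e}_j$ survive) are precisely the content of verifying the reflective edge description, and they do check out. The paper's route buys brevity by reusing a general theorem; yours buys a transparent explanation of why the finite tests over $\C{D}$ and $\C{G}$ suffice, which a reader of this paper must otherwise take on faith.
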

\smallskip

The sets $\C{F}_a$ and $\C{F}_b$ are the cones of first-order feasible
directions at $\m{x}$ and $\m{y}$.
The set $\C{G}$ is a reflective edge description of the feasible set,
introduced in \cite{HagerHungerford13};
that is,
each edge of the constraint polyhedron of (\ref{CVSP})
is parallel to an element of $\C{G}$.
Since $\C{D}$ is a finite set, checking the first-order
optimality conditions reduces to testing the
conditions (C1) and (C2) for
the finite collection of
elements from $\C{D}$ that are in the cone of first-order feasible directions;
testing the second-order optimality conditions reduces to testing the
condition (C3) for the elements from $\C{G}$ that are in the critical cone
$\C{C}(\m{x},\m{y})$.

\begin{proposition}\label{propgammareds}
Let $\gamma \in \mathbb{R}$,
let $(\m{x}, \m{y})$ be a feasible point in 
$(\ref{CVSP})$ which satisfies the first-order optimality condition {\rm (C1)},
and let $\tilde{\gamma}\le\gamma$.
\begin{itemize}
\item[{\rm 1.}]
Suppose $\ell_a < \m{w}\tr\m{x} < u_a$.
Then {\rm (C1)} holds at
$(\m{x}, \m{y})$ for $f = f_{\tilde{\gamma}}$
if and only if $\tilde{\gamma} \ge \alpha_1$, where
\begin{eqnarray*}
\alpha_1 := \max\;\left\{\frac{c_j}{\AI_j\m{y}} : j \in \C{J}\right\}
& \;\;\mbox{and}\;\; &
\C{J} := \left\{ j : x_j < 1 \mbox{ and } \AI_j \m{y} > 0 \right\},
\end{eqnarray*}
with $\alpha_1 := -\infty$ if $\C{J} = \emptyset$.

\item[{\rm 2.}]
Suppose $\m{w}\tr\m{x} = u_a$.
Then {\rm (C1)} holds at $(\m{x},\m{y})$
for $f = f_{\tilde{\gamma}}$
if and only if $\tilde{\gamma} \ge \alpha_2$
where
\begin{eqnarray*}
\alpha_2 := \inf\; \Gamma
& \;\;\mbox{and}\;\; &
\Gamma := \left\{ \alpha \in \mathbb{R} :
\frac{1}{w_i}\frac{\partial f_{\alpha}}{\partial x_i}(\m{x},\m{y})
\le
\frac{1}{w_j}\frac{\partial f_{\alpha}}{\partial x_j}(\m{x},\m{y})
\mbox{ for all } x_i < 1 \mbox{ and } x_j > 0 \right\}.
\end{eqnarray*}
\end{itemize}
\end{proposition}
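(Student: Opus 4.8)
The plan is to verify directly the first-order condition (C1) of Theorem~\ref{firstsecond-order} for the perturbed objective $f_{\tilde{\gamma}}$, by determining which generators $\m{d}\in\C{D}$ lie in the feasible cone $\C{F}_a(\m{x})$ and then tracking how the sign of $\nablax f_{\tilde{\gamma}}(\m{x},\m{y})\m{d}$ behaves as $\gamma$ is lowered to $\tilde{\gamma}$. The computation underlying everything is $\frac{\partial f_{\tilde{\gamma}}}{\partial x_i}(\m{x},\m{y}) = c_i - \tilde{\gamma}\,\AI_i\m{y}$, together with the fact that $\AI_i\m{y}\ge 0$ because $\AI$ and $\m{y}$ are nonnegative. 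This nonnegativity is exactly what makes the perturbation monotone in $\tilde{\gamma}$, and it drives both parts.

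For Part~1, since neither weight bound is active the weight constraints impose no condition on $\m{w}\tr\m{d}$, so $\C{F}_a(\m{x})\cap\C{D}$ consists of $\m{e}_i$ for $x_i<1$, $-\m{e}_i$ for $x_i>0$, and the balanced vectors $w_j\m{e}_i-w_i\m{e}_j$ for $x_i<1,\,x_j>0$. I would first observe that, because $\m{w}>\m{0}$, each balanced vector is the positive combination $w_j\m{e}_i+w_i(-\m{e}_j)$ of feasible unit directions, so $\nablax f_{\tilde{\gamma}}\cdot(w_j\m{e}_i-w_i\m{e}_j)\le 0$ follows from the two unit-direction inequalities; the balanced vectors are thus redundant for (C1). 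This reduces (C1) for $f_{\tilde{\gamma}}$ to requiring $\frac{\partial f_{\tilde{\gamma}}}{\partial x_i}\le 0$ whenever $x_i<1$ and $\frac{\partial f_{\tilde{\gamma}}}{\partial x_i}\ge 0$ whenever $x_i>0$. The ``$\ge 0$'' inequalities survive the perturbation automatically: (C1) at $\gamma$ gives $c_i\ge\gamma\,\AI_i\m{y}$, and $\tilde{\gamma}\le\gamma$ with $\AI_i\m{y}\ge 0$ yields $c_i\ge\tilde{\gamma}\,\AI_i\m{y}$. Among the ``$\le 0$'' inequalities, those with $\AI_i\m{y}=0$ hold because (C1) at $\gamma$ already forces $c_i\le 0$, while each $i\in\C{J}$ (where $\AI_i\m{y}>0$) is equivalent to $\tilde{\gamma}\ge c_i/(\AI_i\m{y})$. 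Maximizing over $\C{J}$ yields precisely $\tilde{\gamma}\ge\alpha_1$, the empty-set case giving $\alpha_1=-\infty$.

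For Part~2, the active bound $\m{w}\tr\m{x}=u_a$ adds the restriction $\m{w}\tr\m{d}\le 0$ to $\C{F}_a(\m{x})$. This discards every $\m{e}_i$ (since $\m{w}\tr\m{e}_i=w_i>0$) but keeps $-\m{e}_i$ for $x_i>0$ and every balanced vector (since $\m{w}\tr(w_j\m{e}_i-w_i\m{e}_j)=0$). The conditions arising from any non-balanced generators ($-\m{e}_i$, when present) are again preserved under $\tilde{\gamma}\le\gamma$ by the same monotonicity argument, so (C1) for $f_{\tilde{\gamma}}$ collapses to the balanced-direction inequalities, which are exactly the defining inequalities of $\Gamma$; hence (C1) holds at $(\m{x},\m{y})$ for $f_{\tilde{\gamma}}$ if and only if $\tilde{\gamma}\in\Gamma$.

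The main obstacle is the last step, passing from ``$\tilde{\gamma}\in\Gamma$'' to ``$\tilde{\gamma}\ge\alpha_2=\inf\Gamma$''. Writing each balanced inequality as $g_i(\alpha)\le g_j(\alpha)$ with $g_k(\alpha):=(c_k-\alpha\,\AI_k\m{y})/w_k$ affine in $\alpha$, it becomes a single linear constraint on $\alpha$ whose type is governed by the sign of $\AI_i\m{y}/w_i-\AI_j\m{y}/w_j$: a lower bound on $\alpha$, an upper bound on $\alpha$, or a condition independent of $\alpha$. Consequently $\Gamma$ is an interval $[\alpha_2,U]$. The decisive point is that the hypothesis that (C1) holds at $\gamma$ means $\gamma\in\Gamma$, so $\gamma\le U$; therefore for every admissible $\tilde{\gamma}\le\gamma$ the upper-bound constraints are automatically slack, and membership $\tilde{\gamma}\in\Gamma$ reduces to the single lower-bound requirement $\tilde{\gamma}\ge\alpha_2$. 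This gives the stated equivalence and, with Part~1, completes the proof.
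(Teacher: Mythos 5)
Your proposal is correct and follows essentially the same route as the paper: both reduce (C1) to the sign conditions on the generators of $\C{F}_a(\m{x})\cap\C{D}$, use the nonnegativity of $\AI_i\m{y}$ to show the $-\m{e}_i$ conditions are preserved when $\tilde{\gamma}$ decreases, and in Part 2 exploit the affine dependence on $\tilde{\gamma}$ to conclude that $\Gamma$ is a closed interval containing $\gamma$, so that membership for $\tilde{\gamma}\le\gamma$ reduces to $\tilde{\gamma}\ge\alpha_2$. The only cosmetic difference is that you spell out the redundancy of the balanced directions in Part 1 and the interval classification in Part 2 slightly more explicitly than the paper does.
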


\begin{proof}
{\bf Part 1.}
Since $\ell_a < \m{w}\tr\m{x} < u_a$,
the cone of first-order feasible directions for $\m{x}$ is given by
\[
\C{F}_a (\m{x}) =
\left\{\m{d} \in \mathbb{R}^n:
d_i \ge 0 \mbox{  when  } x_i = 0 \mbox{ and }
d_i \le 0 \mbox{  when  } x_i = 1, \; i = 1, \ldots, n
\right\}.
\]
It follows that for each $i = 1,\ldots,n$,
\[
\begin{array}{ccl}
\m{e}_i & \in & \C{F}_a (\m{x}) \;\mbox{ if and only if } \; x_i < 1, \\
-\m{e}_i & \in & \C{F}_a (\m{x}) \;\mbox{ if and only if }\; x_i > 0, \\
(w_j\m{e}_i -w_i\m{e}_j) & \in & \C{F}_a (\m{x}) \;\mbox{ if and only if } \;x_i < 1
\mbox{ and } x_j > 0.
\end{array}
\]
Hence, the first-order optimality condition (C1) for $f_{\tilde{\gamma}}$
can be expressed as follows:
\begin{eqnarray}
\nabla_{\m{x}} f_{\tilde{\gamma}} (\m{x}, \m{y}) \m{e}_i \le 0
& \; \mbox{ when } \; & x_i < 1, \label{xcond1} \\
\nabla_{\m{x}} f_{\tilde{\gamma}} (\m{x}, \m{y}) \m{e}_i \ge 0
& \; \mbox{ when } \; & x_i > 0, \label{xcond2} \\
\nabla_{\m{x}} f_{\tilde{\gamma}} (\m{x}, \m{y}) (w_j\m{e}_i - w_i\m{e}_j) \le 0
& \; \mbox{ when } \; & x_i < 1 \mbox{ and } x_j > 0. \label{xcond3}
\end{eqnarray}
Since (\ref{xcond3}) is implied by (\ref{xcond1}) and (\ref{xcond2}), it
follows that (C1) holds if and only if (\ref{xcond1}) and (\ref{xcond2}) hold.
Since (C1) holds for $f_\gamma$, we know that
\[
\nabla_{\m{x}} f_{{\gamma}} (\m{x}, \m{y}) \m{e}_i =
c_i - \gamma \AI_i\m{y} \ge 0 \;\;
\mbox{when}\;\; x_i > 0.
\]
Hence, since $\tilde{\gamma} \le \gamma$ and $\AI_i\m{y} \ge 0$,
\[
\nabla_{\m{x}} f_{\tilde{\gamma}} (\m{x}, \m{y}) \m{e}_i =
c_i - \tilde{\gamma} \AI_i\m{y} \ge 0 \quad
\mbox{when } x_i > 0.
\]
Hence, (C1) holds with respect to $\tilde{\gamma}$
if and only if (\ref{xcond1}) holds.
Since (C1) holds for $f = f_{\gamma}$, we have
\begin{equation}\label{h70}
c_j - \gamma\AI_j \m{y} \le 0 \mbox{  when  } x_j < 1.
\end{equation}
Hence, for every $j$ such that $x_j < 1$ and $\AI_j\m{y} = 0$, we
have
\[
c_j - \tilde{\gamma}\AI_j\m{y} = c_j = c_j -
\gamma\AI_j\m{y} \le 0.
\]
So, (\ref{xcond1}) holds if and only if
$c_j - \tilde{\gamma}\AI_j\m{y} \le 0$ for every
$j \in \C{J}$; that is, if and only if $\tilde{\gamma} \ge \alpha_1$.
This completes the proof of Part 1.

{\bf Part 2.}
Since $\m{w}\tr\m{x} = u_a$, the cone of first-order
feasible directions at $\m{x}$ has the constraint $\m{w}\tr\m{d} \le 0$.
Consequently, $\m{e}_i \not\in \C{F}_a (\m{x}) \cap \C{D}$ for any $i$,
and the first-order optimality condition (C1) for $f_{\tilde{\gamma}}$ reduces
to (\ref{xcond2})--(\ref{xcond3}).
As in Part 1, (\ref{xcond2}) holds, since $\tilde{\gamma} \le \gamma$.
Condition (\ref{xcond3}) is equivalent to $\tilde{\gamma} \in \Gamma$.
Since (\ref{xcond3}) holds
for $f = f_{\gamma}$, we have $\gamma \in \Gamma$.
Since $\nabla_{\m{x}} f_{\tilde{\gamma}} (\m{x}, \m{y})$ is a affine
function of $\tilde{\gamma}$, the set of $\tilde{\gamma}$ satisfying
(\ref{xcond3}) for some $i$ and $j$ such that $x_j > 0$ and $x_i < 1$
is a closed interval, and the intersection of the intervals
over all $i$ and $j$ for which $x_j > 0$ and $x_i < 1$ is also a closed
interval.
Hence, since $\tilde{\gamma} \le \gamma \in \Gamma$,
we have $\tilde{\gamma} \in \Gamma$
if and only if $\tilde{\gamma} \ge \alpha_2$.
This completes the proof of Part 2.
\end{proof}
\smallskip

{\it Remark \ref{sectSolve}.1}:
Of course, Proposition \ref{propgammareds} also holds when
the variables $\m{x}$ and $\m{y}$ and the bounds $(\ell_a, u_a)$
and $(\ell_b, u_b)$ are interchanged.
In most applications, $u_a$ and $u_b > {\C{W}(\C{V})}/{2}$,
$\ell_a = \ell_b = 1$, and as the iterates converge to a solution of
(\ref{CVSP}), either the constraint
$\m{w}\tr \m{x} \le u_a$ or the constraint $\m{w}\tr \m{y} \le u_b$ is active.
Thus, for a given iterate $(\m{x}, \m{y})$, the assumptions of Part~1
typically apply to either $\m{x}$ or $\m{y}$,
while the assumptions of Part~2 apply to the other variable.
Although the Part~2 condition seems complex, it often provides no useful
information in the following sense:
In a multilevel implementation, the vertex costs (and weights) are often 1 at
the finest level, and at coarser levels, the vertex costs may not differ
greatly.
When the vertex costs are equal, (\ref{xcond3})
holds when $\tilde{\gamma}$ has the same sign as $\gamma$; that is,
as long as $\tilde{\gamma} \ge 0$.
Thus, $\alpha_2 = 0$.
Since $\alpha_1$ is typically positive, the tighter bound on
$\tilde{\gamma}$ is the interval $[\alpha_1, \gamma]$,
which means that when $\tilde{\gamma} < \alpha_1$,
$(\m{x}, \m{y})$ is no longer a
stationary point for $f = f_{\tilde{\gamma}}$.
\smallskip

Algorithm~\ref{alg3}, also denoted MCA{\underscore}GR, approximately solves
(\ref{CVSP}), while incorporating both $\m{c}$-perturbations and
$\gamma$-refinements.
Here, the notation MCA$(\m{x},\m{y},\tilde{\gamma})$
indicates that the MCA algorithm is applied to the point $(\m{x},\m{y})$ using
$\tilde{\gamma}$ in place of $\gamma$ as the penalty parameter.
In our implementation, the $\gamma$-refinements
are performed in the following way: $\gamma$ is reduced from the initial
value $\alpha_1$ in 10 uniform decrements until it reaches 0 or the optimal
objective value improves.
We note that in \ref{alg3},
the $\m{c}$-perturbations are embedded in the $\gamma$-refinement procedure
in order to obtain a local optimizer of high quality for each $\tilde{\gamma}$.

\renewcommand\figurename{Algorithm}
\begin{figure}[t]
\begin{center}
\parbox{0cm}{
\begin{tabbing}
\hspace{.4in}\=\hspace{.4in}\=\hspace{.4in}\=\hspace{.4in} \kill
{\bf Input:} A feasible point $({\m{x}}, {\m{y}})$ for (\ref{CVSP}).\\
$({\m{x}}, {\m{y}})\; \leftarrow\; \mbox{MCA}{\underscore}\mbox{CP}\;
({\m{x}}, {\m{y}})$\\
$\tilde{\gamma}\;\leftarrow\;\alpha_1$\\
{\bf while} ( $\tilde{\gamma} > 0$ )\\
\>$\tilde{\gamma}\;\leftarrow\;\mbox{reduce}\; (\tilde{\gamma})$\\
\>$(\tilde{{\m{x}}}, \tilde{{\m{y}}})\;\leftarrow\;
\mbox{MCA}{\underscore}\mbox{CP}\;({\m{x}}, {\m{y}}, \tilde{\gamma})$\\
\>$({\m{x}}^*, {\m{y}}^*)\;\leftarrow\;
\mbox{MCA}{\underscore}\mbox{CP}\;
(\tilde{\m{x}}, \tilde{\m{y}}, \gamma)$\\
\>{\bf if} ( $f ({\m{x}}^*, {\m{y}}^*) > f ({\m{x}}, {\m{y}})$ ) \\
\>\>$({\m{x}}, {\m{y}})\; \leftarrow\; ({\m{x}}^*, {\m{y}}^*)$\\
\>\>$\tilde{\gamma}\;\leftarrow\;\alpha_1$\\
{\bf end while}\\
{\bf return} $(\m{x},\m{y})$
\end{tabbing}}
\end{center}
\caption{{\bf MCA{\underscore}GR:} A refinement algorithm for $(\ref{CVSP})$
which incorporates $\m{c}$-perturbations and $\gamma$-refinements.
\label{alg3}}
\end{figure}
\renewcommand\figurename{Fig.}

\section{Multilevel algorithm}
\label{sectAlgorithm}
We now give an overview of a multilevel algorithm, which we call BLP, for
solving the vertex separator problem.
The algorithm consists of three phases: coarsening, solving, and
uncoarsening.

{\bf Coarsening.}
Vertices are visited one by one and each vertex is matched with an unmatched
adjacent vertex, whenever one exists. Matched vertices are merged together
to form a single vertex having a cost and weight equal to the sum of the
costs and weights of the constituent vertices. Multiple edges which arise
between two vertices are combined
and assigned an edge weight equal to the sum of the weights of combined edges
(in the original graph, all edges are assumed to have weight $1$). This
coarsening process repeats until the graph has fewer than $75$ vertices or
fewer than $10$ edges.

The goal of the coarsening phase is to reduce the number
of degrees of freedom in the problem, while preserving its structure so that
the solutions obtained for the coarse problems give a good approximation
to the solution for the original problem.
We consider two matching rules: random and heavy-edge.
In heavy-edge based matching, 
each vertex is matched with an unmatched neighbor such that
the edge between them has the greatest weight over all unmatched neighbors.
Heavy edge matching rules have
been used in multilevel algorithms such as
\cite{Hendrickson, KarypisKumar98e}, and were originally developed for edge-cut
problems.
In our initial experiments, we also considered a third rule based on
an \emph{algebraic distance} \cite{Safro11} between vertices. However, 
the results were not significantly different from heavy-edge matching, which
is not surprising, since (like heavy-edge rules)
the algebraic distance was originally developed for minimizing edge-cuts.

{\bf Solving.}
For each of the graphs in the multilevel hierarchy, 
we approximately solve
(\ref{CVSP}) using MCA{\underscore}GR.
For the coarsest graph, the starting guess is $x_i = u_a/\C{W}(\C{V})$ and
$y_i = u_b/\C{W}(\C{V})$, $i = 1,2,\ldots,n$.
For the finer graphs, a starting guess is obtained from the next coarser
level using the uncoarsening process described below.
After MCA{\underscore}GR terminates,
Algorithm~\ref{sectProblem}.2 along with the
modification discussed after Proposition~\ref{cvspcoarse} are used to
obtain a binary approximation to a solution of (\ref{CVSP}), and then
Algorithm~\ref{sectProblem}.1 is used to convert the binary solution
into a vertex separator.

{\bf Uncoarsening.}
We use the solution for the vertex separator problem computed at any
level in the multilevel hierarchy
as a starting guess for the solution at the next finer level.
Sophisticated cycling techniques like the W- or F-cycle
\cite{vlsicad} were not implemented.
A starting guess for the next finer graph is obtained by unmatching
vertices in the coarser graph.
Suppose that we are uncoarsening from level $l$ to $l-1$ and
$(\m{x}^l, \m{y}^l)$ denotes the solution computed at level $l$.
If vertex $i$ at level $l$ is obtained by matching vertices $j$ and $k$
at level $l-1$, then our starting guess for $(\m{x}^{l-1}, \m{y}^{l-1})$
is simply $(x_{j}^{l-1}, y_{j}^{l-1}) = (x_{i}^{l}, y_{i}^{l})$ and
$(x_{k}^{l-1}, y_{k}^{l-1}) = (x_{i}^{l}, y_{i}^{l})$.

\section{Numerical results}
\label{sectResults}
The multilevel algorithm was programmed in C++ and compiled using g++ with
optimization O3 on a Dell
Precision T7610 Workstation with a Dual Intel Xeon Processor
E5-2687W v2 (16 physical cores, 3.4GHz, 25.6MB cache, 192GB memory).
The sorting phase in the solution of the linear programs in MCA was
carried out by calling $std::sort$, the ($O (n\log n)$) sorting routine
implemented in the C++ standard library.
Comparisons were made with the routine
\smallskip
\begin{center}
METIS{\underscore}ComputeVertexSeparator,
\end{center}
\smallskip
available from METIS 5.1.0 \cite{KarypisKumar98e}.
The following options were employed:
\smallskip
\begin{center}
\begin{tabular}{l}
METIS{\underscore}IPTYPE{\underscore}NODE
(coarsest problem solved with node growth scheme), \\
METIS{\underscore}RTYPE{\underscore}SEP2SIDED
(Fiduccia-Mattheyses-like refinement scheme).
\end{tabular}
\end{center}
\smallskip

In a preliminary experiment, we
also considered the refinement option METIS{\underscore}RTYPE{\underscore}
SEP1SIDED, but the results obtained were not significantly different. On the average, the option METIS{\underscore}RTYPE{\underscore}SEP2SIDED provides slightly better results.
Additionally, we considered both heavy-edge matching
(METIS{\underscore}CTYPE{\underscore}SHEM) and random matching
(METIS{\underscore}CTYPE{\underscore}RM).

For our experiments, we considered 59 sparse graphs
with dimensions ranging from $n = 1,000$ to $n = 1,965,206$
and sparsities ranging
from $1.43\times10^{-6}$ to $1.32\times10^{-2}$, where
sparsity is defined as the ratio $\frac{|\C{E}|}{n(n-1)}$ (recall that
$|\C{E}|$ is equal to twice the number of edges).
Twenty of these graphs correspond to the adjacency matrix for
symmetric matrices from the
University of Florida Sparse Matrix Collection \cite{UFcollection}.
Column 2 of Table
\ref{table1} gives the number of vertices for each of these graphs, followed by
the number of edges ($|\C{E}|/2$),
the sparsity, and the minimum, maximum, and average vertex degrees,
respectively.

Eight large graphs with heavy-tailed degree distribution (HTDD, i.e., there is a large gap between minimum and maximum vertex degree)
were selected from the Stanford SNAP database \cite{snapnets}
(see Table~\ref{table2}).

Fifteen graphs (see Table~\ref{table3})
were taken from \cite{Safro12}, and were designed to be
especially challenging for multilevel graph partitioners.
The challenge in these graphs derives
from the fact that the optimal vertex separator is sparse,
yet densely connected to the two shores ($\C{A}$ and $\C{B}$). Also, these graphs represent mixtures of different structures (similar to multi-mode networks) which makes the coarsening uneven.

The Vertex Separator Problem is of particular importance in cyber security.
For example, it can be used to disconnect a largest connected component
in a network to prevent a possible spread of an attack or to find
non-robust structures.
Therefore, we also experimented with a set of $9$ peer-to-peer networks
from SNAP that were collected in \cite{Ripeanu02mappingthe}
(see Table~\ref{table3.5}).

The UF, HTDD, Hard, and p2p graphs have at most $139,752$ nodes. In order
to assess the performance of BLP on very large scale graphs, we also considered
$7$ graphs from the Konect database \cite{konect} having between $317,080$ and
$1,965,206$ nodes and between $925,872$ and $11,095,298$ edges (see Table
\ref{table3.6}).

\begin{table}
{\small
\begin{center}
\begin{tabular}{l r r r r r r}
&&&&\multicolumn{3}{c}{Degree} \\
\hline
Graph & $|\C{V}|$ & $|\C{E}|/2$ & Sparsity & Min & Max & Ave\\
\hline
bcspwr09&1723&2394&1.61E-03&1&14&2.78\\
bcsstk17&10974&208838&3.47E-03&0&149&38.06\\
c-38&8127&34781&1.05E-03&1&888&8.56\\
c-43&11125&56275&9.09E-04&1&2619&10.12\\
crystm01&4875&50232&4.23E-03&7&26&20.61\\
delaunay\_n13&8192&24547&7.32E-04&3&12&5.99\\
Erdos992&6100&7515&4.04E-04&0&61&2.46\\
fxm3\_6&5026&44500&3.52E-03&3&128&17.71\\
G42&2000&11779&5.89E-03&4&249&11.78\\
jagmesh7&1138&3156&4.88E-03&3&6&5.55\\
lshp3466&3466&10215&1.70E-03&3&6&5.89\\
minnesota&2642&3303&9.47E-04&1&5&2.5\\
nasa4704&4704&50026&4.52E-03&5&41&21.27\\
net25&9520&195840&4.32E-03&2&138&41.14\\
netscience&1589&2742&2.17E-03&0&34&3.45\\
netz4504&1961&2578&1.34E-03&2&8&2.63\\
sherman1&1000&1375&2.75E-03&0&6&2.75\\
sstmodel&3345&9702&1.73E-03&0&17&5.8\\
USpowerGrid&4941&6594&5.40E-04&1&19&2.67\\
yeast&2361&6646&2.39E-03&0&64&5.63\\
\hline
\end{tabular}
\end{center}
\smallskip

\caption{UF Graphs. \label{table1}}
}
\end{table}


\begin{table}
{\small
\begin{center}
\begin{tabular}{l r r r r r r}
&&&&\multicolumn{3}{c}{Degree} \\
\hline
Graph & $|\C{V}|$ & $|\C{E}|/2$ & Sparsity & Min & Max & Ave\\
\hline
ca-HepPh&7241&202194&7.71E-03&2&982&55.85\\
email-Enron&9660&224896&4.82E-03&2&2532&46.56\\
email-EuAll&16805&76156&5.39E-04&1&3360&9.06\\
oregon2\_010505&5441&19505&1.32E-03&1&1888&7.17\\
soc-Epinions1&22908&389439&1.48E-03&1&3026&34\\
web-NotreDame&56429&235285&1.48E-04&1&6852&8.34\\
web-Stanford&122749&1409561&1.87E-04&1&35053&22.97\\
wiki-Vote&3809&95996&1.32E-02&1&1167&50.4\\
\hline
\end{tabular}
\end{center}
\smallskip

\caption{Heavy-tailed degree distribution graphs from the SNAP database.
\label{table2}}
}
\end{table}

\begin{table}
{\small
\begin{center}
\begin{tabular}{l r r r r r r}
&&&&\multicolumn{3}{c}{Degree} \\
\hline
Graph & $|\C{V}|$ & $|\C{E}|/2$ & Sparsity & Min & Max & Ave\\
\hline
barth5\_1Ksep\_50in\_5Kout&32212&101805&1.96E-04&1&22&6.32\\
bcsstk30\_500sep\_10in\_1Kout&58348&2016578&1.18E-03&0&219&69.12\\
befref\_fxm\_2\_4\_air02&14109&98224&9.87E-04&1&1531&13.92\\
bump2\_e18\_aa01\_model1\_crew1&56438&300801&1.89E-04&1&604&10.66\\
c-30\_data\_data&11023&62184&1.02E-03&1&2109&11.28\\
c-60\_data\_cti\_cs4&85830&241080&6.55E-05&1&2207&5.62\\
data\_and\_seymourl&9167&55866&1.33E-03&1&229&12.19\\
finan512\_scagr7-2c\_rlfddd&139752&552020&5.65E-05&1&669&7.9\\
mod2\_pgp2\_slptsk&101364&389368&7.58E-05&1&1901&7.68\\
model1\_crew1\_cr42\_south31&45101&189976&1.87E-04&1&17663&8.42\\
msc10848\_300sep\_100in\_1Kout&21996&1221028&5.05E-03&1&722&111.02\\
p0291\_seymourl\_iiasa&10498&53868&9.78E-04&1&229&10.26\\
sctap1-2b\_and\_seymourl&40174&140831&1.75E-04&1&1714&7.01\\
south31\_slptsk&39668&189914&2.41E-04&1&17663&9.58\\
vibrobox\_scagr7-2c\_rlfddd&77328&435586&1.46E-04&1&669&11.27\\
\hline
\end{tabular}
\end{center}
\smallskip

\caption{Hard graphs from {\rm \cite{Safro12}}. \label{table3}}
}
\end{table}

\begin{table}
{\small
\begin{center}
\begin{tabular}{l r r r r r r}
&&&&\multicolumn{3}{c}{Degree} \\
\hline
Graph & $|\C{V}|$ & $|\C{E}|/2$ & Sparsity & Min & Max & Ave\\
\hline
p2p-Gnutella04&10879&39994&6.76E-04&0&103&7.35\\
p2p-Gnutella05&8846&31839&8.14E-04&1&88&7.2\\
p2p-Gnutella06&8717&31525&8.30E-04&1&115&7.23\\
p2p-Gnutella08&6301&20777&1.05E-03&1&97&6.59\\
p2p-Gnutella09&8114&26013&7.90E-04&1&102&6.41\\
p2p-Gnutella24&26518&65369&1.86E-04&1&355&4.93\\
p2p-Gnutella25&22687&54705&2.13E-04&1&66&4.82\\
p2p-Gnutella30&36682&88328&1.31E-04&1&55&4.82\\
p2p-Gnutella31&62586&147892&7.55E-05&1&95&4.73\\
\hline
\end{tabular}
\end{center}
\smallskip

\caption{Peer-to-peer networks from {\rm \cite{Ripeanu02mappingthe}}.
\label{table3.5}}
}
\end{table}

\begin{table}
{\small
\begin{center}
\begin{tabular}{l r r r r r r}
&&&&\multicolumn{3}{c}{Degree} \\
\hline
Graph & $|\C{V}|$ & $|\C{E}|/2$ & Sparsity & Min & Max & Ave\\
\hline
out.as-skitter & 1696415 & 11095298 & 7.71E-06 & 1 & 35455 & 13.08\\
out.com-amazon & 334863 & 925872 & 1.65E-05 & 1 & 549 & 5.53\\
out.com-dblp & 317080 & 1049866 & 2.09E-05 & 1 & 343 & 6.62\\
out.com-youtube & 1134890 & 2987624 & 4.64E-06 & 1 & 28754 & 5.27\\
out.roadNet-CA & 1965206 & 2766607 & 1.43E-06 & 1 & 12 & 2.82\\
out.roadNet-PA & 1088092 & 1541898 & 2.60E-06 & 1 & 9 & 2.83\\
out.roadNet-TX & 1379917 & 1921660 & 2.02E-06 & 1 & 12 & 2.79\\
\hline
\end{tabular}
\end{center}
\smallskip

\caption{Konect graphs from {\rm \cite{konect}}.
\label{table3.6}}
}
\end{table}

Vertex costs $c_i$ and weights $w_i$ were assumed to be $1$ at the finest
level for all graphs.
For the bounds on the shores of the separator, we took
$\ell_a = \ell_b = 1$ and $u_a = u_b = \lfloor 0.6 n \rfloor$, which
are the default bounds used by METIS.
Here $\lfloor r \rfloor$ denotes the largest integer not greater than $r$.
In order to enable future comparisons with our solver, we have made this
benchmark set available at http://people.cs.clemson.edu/$\sim$isafro/data.html.


\subsection{Refinement comparison}\label{subsect1}
In this subsection, we compare
an FM-style refinement to a refinement based upon solving (\ref{CVSP})
using the following two experiments:
\begin{itemize}
\item[1.] Obtain an approximate solution
$(\tilde{\m{x}},\tilde{\m{y}})$ to the VSP by calling the multilevel algorithm
METIS{\underscore}ComputeVertexSeparator. Next, refine
$(\tilde{\m{x}},\tilde{\m{y}})$ by calling MCA{\underscore}GR.
\item[2.] Obtain an approximate solution
$(\tilde{\m{x}},\tilde{\m{y}})$ to the VSP by invoking the multilevel
algorithm BLP.
Next, refine $(\tilde{\m{x}},\tilde{\m{y}})$ by calling
METIS{\underscore}NodeRefine.
\end{itemize}
METIS{\underscore}NodeRefine is a refinement routine which
improves upon an initial solution using a Fiduccia-Mattheyses-style
refinement: Vertices in $\C{S}$ are moved into either $\C{A}$ or $\C{B}$
and their neighbors in the opposite shore are moved into $\C{S}$. Vertices
having the largest gains are moved first.

The results of the two
experiments are given in Table \ref{table7}. Columns labeled
MCA{\underscore}GR give the average, minimum, and maximum improvement in the size of
$\C{S}$ from calling MCA{\underscore}GR in
Experiment 1, and the last three columns give the results for
Experiment 2.
Here, the improvement is expressed as a percentage using the formula
$100(\C{C}(\C{S}_{{\rm initial}}) - \C{C}(\C{S}_{{\rm final}}))/\C{C}(\C{V})$.
In Experiment 1, we observed that in every initial solution obtained by
METIS{\underscore}ComputeVertexSeparator,
the upper bounds on both of the sets $\C{A}$ and $\C{B}$
were inactive, which we can show implies that the METIS solution
is a local minimizer in the continuous quadratic program (\ref{CVSP}).
Hence, the algorithm MCA was unable to improve upon the METIS solutions.
However, MCA{\underscore}GR
gave improvements
of $0.16\%$ on average, compared to only $0.12\%$ in Experiment 2, when
METIS{\underscore}NodeRefine was used.
The greatest improvements achieved by MCA{\underscore}GR
are seen in the HTDD and p2p graphs.
METIS{\underscore}NodeRefine was the most effective on the UF and Hard graphs.
\begin{table}
\begin{center}
\begin{tabular}{r | r r r | r r l}
Graph Type & \multicolumn{3}{c|}{MCA{\underscore}GR} &
\multicolumn{3}{c}{METIS{\underscore}NodeRefine} \\
\hline
& avg & min & max & avg & min & max\\
\hline
UF&0.02&0.00&0.22&0.03&0.00&0.38\\
HTDD&0.36&0.00&2.06&0.06&0.00&0.55\\
Hard&0.07&0.00&0.54&0.16&0.00&1.64\\
p2p&0.49&0.03&1.20&0.27&0.04&0.58\\
\hline
Total&0.16&0.00&2.06&0.12&0.00&1.64\\
\hline
\end{tabular}
\end{center}
\smallskip
\caption{Percent improvement in separator sizes using MCA{\underscore}GR
or METIS{\underscore}NodeRefine.\label{table7}}
\end{table}

\subsection{Multilevel solution comparison}\label{subsect2}

Tables \ref{table4}--\ref{tab:konect}
compare the costs $\C{C}(\C{S})$ of the vertex separators found 
by the multilevel implementations BLP and METIS.
The coarsening phases of both algorithms involve matching vertices, which
depends on a random seed.
Therefore
100 trials (with different random seeds) were run for each graph.
The tables report
the average, minimum, and maximum costs obtained by each algorithm.
Columns labeled METIS{\underscore}RM and BLP{\underscore}RM give
the results of using METIS or BLP with random
matching, and columns labeled METIS{\underscore}HE and BLP{\underscore}HE
indicate heavy-edge based matching.
The columns labeled BLP{\underscore}RMFM and BLP{\underscore}HEFM
correspond to a hybrid approach, in which solutions are refined by first performing
Fiduccia-Mattheyses-like (FM) swaps, followed by MCA{\underscore}GR.
FM swaps were performed by calling METIS{\underscore}NodeRefine.
Due to large running times of BLP on the Konect test set, only
BLP{\underscore}RM and
METIS{\underscore}RM were compared for these graphs.

The data in Tables \ref{table4}--\ref{tab:konect} is summarized in Tables
\ref{table_avgs4}--\ref{table_avgs3}. For instance, Table \ref{table_avgs4}
gives the percentage of graphs of each type for which the average size of the
separator obtained by BLP{\underscore}RM was strictly better than METIS{\underscore}RM
(\% Wins)
and the average, minimum, and maximum percentage improvement in the
average separator size compared to METIS, as measured by the expression
$100(\C{C}(\C{S}_{{\rm METIS}}) - \C{C}(\C{S}_{{\rm BLP}}))/\C{C}(\C{V})$.
Note that here,
neither solution is used as an initial guess for the other algorithm, unlike
the experiments of Section \ref{subsect1}.
Table \ref{table_avgs5} compares BLP{\underscore}RMFM with METIS{\underscore}RM,
and Tables \ref{table_avgs1} and \ref{table_avgs3} compare BLP{\underscore}HE
and BLP{\underscore}HEFM with METIS{\underscore}HE.

First, we note that the average improvement was positive for all versions of
BLP on all graph types, except for BLP{\underscore}HE on the UF graphs and
BLP{\underscore}RM on the Konect graphs. 
However, even when METIS's separators are smaller, the difference in size is
often not substantial (less than 0.1\%).

The BLP algorithms seem to be the most effective on the p2p, HTDD, and Hard graphs.
Based on our observations from Section \ref{subsect1},
the high performance of BLP
on the p2p and HTDD graphs is probably
due to the refinement algorithm, while the
high performance on the Hard graphs may be attributed (at least partially) to
minor differences in the coarsening schemes used by METIS and BLP.
The p2p
graphs performed exceptionally well, giving an improvement over the METIS
solution in 100\% of the trials.
As expected, the hybrid algorithms performed the best on average.

Due to the large dimension of the Konect graphs, combined with their abnormal
degree distributions, the coarsening scheme of BLP produced a large number
of coarse levels for these graphs (over 200 in some cases), since
on many levels only a small number of vertices could be matched. Since the
computational bottleneck of BLP is the refinement phase,
we decided to refrain from refining a given coarse solution until the number
of vertices increased by a factor of $2$ during the uncoarsening process,
in order to improve the running time of BLP. This is one possible explanation
for the relatively poor performance of BLP for these problems.

\begin{table}
{\small
\begin{center}
\begin{tabular}{l r | r r r | r r r | r r r}
Graph & Best & \multicolumn{3}{c|}{METIS{\underscore}HE} & \multicolumn{3}{c|}{BLP{\underscore}HE}
& \multicolumn{3}{c}{BLP{\underscore}HEFM}\\
\hline
& & avg & min & max & avg & min & max & avg & min & max \\
bcspwr09&6&7.52&6&13&12.20&6&25&9.92&6&22\\
bcsstk17&126&143.88&132&186&158.22&126&234&146.24&126&216\\
c-38&12&14.20&12&22&41.90&14&103&24.16&12&54\\
c-43&83&141.67&103&155&135.76&115&166&108.17&83&138\\
crystm01&65&67.31&65&90&77.64&65&85&73.93&65&85\\
delaunay\_n13&69&74.02&69&83&82.70&72&126&78.61&69&92\\
Erdos992&58&108.07&95&125&72.57&64&82&69.10&58&84\\
fxm3\_6&42&53.48&42&88&66.73&42&90&52.96&42&87\\
G42&412&440.97&424&462&452.28&438&469&441.38&412&466\\
jagmesh7&14&14.03&14&15&19.90&14&42&21.67&14&42\\
lshp3466&51&55.41&51&61&58.96&51&105&52.33&51&73\\
minnesota&14&16.80&14&23&20.75&14&40&19.30&14&35\\
nasa4704&163&176.61&163&206&196.23&168&274&180.25&165&262\\
net25&510&597.34&510&915&557.27&510&993&516.13&510&578\\
netscience&0&0.09&0&3&0.00&0&0&0.00&0&0\\
netz4504&16&18.01&17&20&21.42&16&41&19.93&16&33\\
sherman1&18&29.98&28&39&21.23&18&28&19.88&18&26\\
sstmodel&20&24.28&22&35&27.34&21&37&24.77&20&33\\
USpowerGrid&8&8.98&8&14&20.09&8&44&12.19&8&22\\
yeast&137&192.62&182&213&165.16&156&178&147.23&137&155\\
\hline
Graph & Best & \multicolumn{3}{c|}{METIS{\underscore}RM} & \multicolumn{3}{c|}{BLP{\underscore}RM}
& \multicolumn{3}{c}{BLP{\underscore}RMFM}\\
\hline
bcspwr09&6&7.47&6&11&11.88&7&28&9.82&6&18\\
bcsstk17&126&147.29&138&168&153.45&126&356&156.01&126&346\\
c-38&12&24.82&12&72&51.37&14&97&26.32&12&57\\
c-43&83&140.86&117&156&133.44&108&164&106.46&94&118\\
crystm01&65&66.50&65&90&79.60&65&85&77.57&65&80\\
delaunay\_n13&69&75.49&69&90&86.98&71&125&81.56&69&127\\
Erdos992&58&121.23&109&141&73.08&64&86&69.29&59&82\\
fxm3\_6&42&60.82&42&90&73.55&57&101&67.95&42&99\\
G42&412&441.48&427&458&451.01&440&464&443.23&426&463\\
jagmesh7&14&14.14&14&21&21.23&14&39&19.87&14&49\\
lshp3466&51&55.45&52&61&63.36&51&98&55.14&51&86\\
minnesota&14&17.34&14&23&21.03&15&40&18.70&14&30\\
nasa4704&163&175.45&168&188&174.63&168&208&170.94&166&181\\
net25&510&676.32&641&714&546.63&510&990&528.05&510&621\\
netscience&0&0.16&0&5&0.00&0&0&0.00&0&0\\
netz4504&16&18.29&16&23&21.15&16&36&20.00&16&33\\
sherman1&18&30.70&29&50&21.52&18&30&20.56&18&27\\
sstmodel&20&24.85&22&40&25.99&20&37&24.31&20&34\\
USpowerGrid&8&9.13&8&16&18.84&8&35&12.31&8&23\\
yeast&137&212.37&177&236&165.14&153&178&147.33&137&158\\
\hline
\end{tabular}

\end{center}
\smallskip

\caption{Vertex Separator Costs $\C{C}(\C{S})$ for UF graphs.
\label{table4}}
}
\end{table}


\begin{table}
{\small
\begin{center}
\begin{tabular}{l r | r r r | r r r | r r r}
Graph & Best & \multicolumn{3}{c|}{METIS{\underscore}HE} & \multicolumn{3}{c|}{BLP{\underscore}HE}
& \multicolumn{3}{c}{BLP{\underscore}HEFM}\\
\hline
& & avg & min & max & avg & min & max & avg & min & max \\
ca-HepPh&583&754.23&668&839&657.29&583&706&678.05&591&736\\
email-Enron&426&687.12&604&804&484.94&426&578&481.25&436&583\\
email-EuAll&5&8.99&5&57&9.41&6&18&7.33&6&12\\
oregon2\_010505&37&58.57&48&70&53.30&41&64&43.59&37&59\\
soc-Epinions1&2382&2975.27&2818&3078&2423.42&2382&2465&2529.81&2457&2576\\
web-NotreDame&132&399.97&270&518&431.95&132&543&415.45&134&505\\
web-Stanford&29&133.52&29&575&415.13&95&815&261.23&72&584\\
wiki-Vote&680&704.86&694&731&716.23&698&764&706.03&680&735\\
\hline
Graph & Best & \multicolumn{3}{c|}{METIS{\underscore}RM} & \multicolumn{3}{c|}{BLP{\underscore}RM}
& \multicolumn{3}{c}{BLP{\underscore}RMFM}\\
\hline
ca-HepPh&583&767.56&683&851&674.28&621&720&683.55&625&745\\
email-Enron&426&709.29&650&839&496.20&451&547&487.84&440&574\\
email-EuAll&5&76.04&5&348&11.99&7&35&10.25&6&28\\
oregon2\_010505&37&79.00&66&113&53.60&46&68&43.67&38&51\\
soc-Epinions1&2382&3072.42&2915&3224&2431.98&2399&2475&2529.72&2467&2572\\
web-NotreDame&132&437.77&274&611&462.24&190&567&419.79&136&489\\
web-Stanford&29&143.73&29&484&384.79&134&495&283.44&63&382\\
wiki-Vote&680&708.97&694&737&716.06&696&768&709.51&680&737\\
\hline
\end{tabular}

\end{center}
\smallskip

\caption{Vertex Separator Costs $\C{C}(\C{S})$ for HTDD graphs. \label{table5}}
}
\end{table}

\begin{table}
{\small
\begin{center}
\begin{tabular}{l r | r r r | r r r | r r r}
Graph & Best & \multicolumn{3}{c|}{METIS{\underscore}HE} & \multicolumn{3}{c|}{BLP{\underscore}HE}
& \multicolumn{3}{c}{BLP{\underscore}HEFM}\\
\hline
& & avg & min & max & avg & min & max & avg & min & max\\
vsp\_barth5\_1Ksep\_50in\_5Kout&987&1329.76&1131&1451&1546.57&1309&1767&1353.79&987&1640\\
vsp\_bcsstk30\_500sep\_10in\_1Kout&528&752.65&552&1228&870.56&570&1510&830.00&550&1644\\
vsp\_befref\_fxm\_2\_4\_air02&270&1072.63&989&1142&284.98&270&302&284.87&275&300\\
vsp\_bump2\_e18\_aa01\_model1\_crew1&3849&4306.55&4264&4343&3978.57&3903&4404&4024.77&3849&4411\\
vsp\_c-30\_data\_data&453&510.31&453&594&555.38&475&656&513.32&460&605\\
vsp\_c-60\_data\_cti\_cs4&2222&2600.78&2525&2684&2930.37&2354&3590&2826.26&2222&3589\\
vsp\_data\_and\_seymourl&1030&1253.12&1148&1341&1150.97&1097&1271&1084.40&1045&1258\\
vsp\_finan512\_scagr7-2c\_rlfddd&4605&7438.78&7216&7697&4871.55&4776&5029&4692.32&4608&4798\\
vsp\_mod2\_pgp2\_slptsk&5739&5859.63&5809&5905&7434.46&5767&9711&6391.40&5739&9091\\
vsp\_model1\_crew1\_cr42\_south31&1838&2216.08&2086&2631&2507.86&2424&2576&1971.33&1838&2013\\
vsp\_msc10848\_300sep\_100in\_1Kout&279&648.16&279&929&723.60&343&1421&669.93&387&1209\\
vsp\_p0291\_seymourl\_iiasa&511&536.25&532&542&516.66&511&528&521.60&513&528\\
vsp\_sctap1-2b\_and\_seymourl&3390&4114.48&3831&4373&3925.59&3732&4390&3715.86&3390&4134\\
vsp\_south31\_slptsk&1971&2054.39&1982&2116&2328.52&2242&2567&2031.13&1971&2081\\
vsp\_vibrobox\_scagr7-2c\_rlfddd&2762&3467.55&3362&3613&2856.32&2801&2992&2867.30&2762&3050\\
\hline
Graph & Best & \multicolumn{3}{c|}{METIS{\underscore}RM} & \multicolumn{3}{c|}{BLP{\underscore}RM}
& \multicolumn{3}{c}{BLP{\underscore}RMFM}\\
\hline
vsp\_barth5\_1Ksep\_50in\_5Kout&987&1346.14&1043&1530&1549.01&1326&1818&1361.58&1123&1613\\
vsp\_bcsstk30\_500sep\_10in\_1Kout&528&636.70&528&844&627.95&565&854&629.66&570&850\\
vsp\_befref\_fxm\_2\_4\_air02&270&1464.03&1328&1584&288.35&274&314&285.45&275&303\\
vsp\_bump2\_e18\_aa01\_model1\_crew1&3849&4378.39&4280&4793&3986.52&3897&4389&4070.99&3894&4429\\
vsp\_c-30\_data\_data&453&536.73&471&611&530.51&458&602&500.92&463&568\\
vsp\_c-60\_data\_cti\_cs4&2222&2636.81&2384&2741&2869.78&2345&3545&2666.91&2226&3365\\
vsp\_data\_and\_seymourl&1030&1243.14&1091&1347&1139.91&1087&1207&1082.04&1030&1248\\
vsp\_finan512\_scagr7-2c\_rlfddd&4605&7610.25&7400&7883&4975.58&4861&5170&4691.99&4605&4776\\
vsp\_mod2\_pgp2\_slptsk&5739&5884.52&5838&5925&6123.36&5776&8920&6605.61&5741&9133\\
vsp\_model1\_crew1\_cr42\_south31&1838&2740.18&2558&2894&2508.41&2413&2581&1973.98&1926&2013\\
vsp\_msc10848\_300sep\_100in\_1Kout&279&523.70&279&715&553.37&279&791&541.28&279&785\\
vsp\_p0291\_seymourl\_iiasa&511&535.98&531&545&516.61&511&533&521.36&513&529\\
vsp\_sctap1-2b\_and\_seymourl&3390&4269.77&3884&4557&3921.85&3766&4030&3764.07&3418&4159\\
vsp\_south31\_slptsk&1971&2416.25&2350&2498&2298.76&2205&2563&2033.78&1977&2082\\
vsp\_vibrobox\_scagr7-2c\_rlfddd&2762&4182.86&3995&5240&2878.12&2804&2995&2877.13&2789&2998\\
\hline
\end{tabular}
\end{center}
\smallskip

\caption{Vertex Separator Costs $\C{C}(\C{S})$ for hard graphs.}
\label{table6}
}
\end{table}

\begin{table}
\begin{center}
\begin{tabular}{l r | r r r | r r r | r r r}
Graph & Best & \multicolumn{3}{c|}{METIS{\underscore}HE} & \multicolumn{3}{c|}{BLP{\underscore}HE}
& \multicolumn{3}{c}{BLP{\underscore}HEFM}\\
\hline
& & avg & min & max & avg & min & max & avg & min & max \\
p2p-Gnutella04&1656&2055.11&1986&2103&1710.59&1675&1755&1697.08&1662&1738\\
p2p-Gnutella05&1306&1666.38&1629&1724&1369.70&1340&1404&1348.47&1306&1385\\
p2p-Gnutella06&1253&1605.98&1568&1653&1305.96&1265&1343&1291.29&1260&1327\\
p2p-Gnutella08&771&1009.10&976&1043&825.25&794&855&795.99&772&821\\
p2p-Gnutella09&975&1287.01&1253&1327&1044.83&1021&1081&1003.17&975&1038\\
p2p-Gnutella24&2463&3284.91&3203&3380&2728.58&2671&2781&2511.26&2467&2553\\
p2p-Gnutella25&2043&2761.52&2691&2836&2279.78&2227&2345&2089.72&2043&2143\\
p2p-Gnutella30&3016&4267.82&4094&4398&3320.50&3245&3422&3097.88&3035&3176\\
p2p-Gnutella31&4905&5985.32&5888&6184&5581.26&5460&5750&5002.65&4905&5081\\
\hline
Graph & Best & \multicolumn{3}{c|}{METIS{\underscore}RM} & \multicolumn{3}{c|}{BLP{\underscore}RM}
& \multicolumn{3}{c}{BLP{\underscore}RMFM}\\
\hline
p2p-Gnutella04&1656&2140.50&2089&2200&1708.87&1673&1749&1696.39&1656&1742\\
p2p-Gnutella05&1306&1720.54&1687&1750&1369.40&1341&1407&1350.60&1324&1389\\
p2p-Gnutella06&1253&1689.01&1641&1727&1306.65&1275&1339&1290.61&1253&1318\\
p2p-Gnutella08&771&1042.30&1003&1075&825.46&795&857&794.54&771&822\\
p2p-Gnutella09&975&1328.33&1293&1367&1044.11&1010&1076&1003.04&981&1038\\
p2p-Gnutella24&2463&3617.52&3538&3685&2727.85&2670&2806&2514.59&2463&2563\\
p2p-Gnutella25&2043&3008.89&2931&3095&2276.96&2233&2335&2091.69&2046&2144\\
p2p-Gnutella30&3016&4692.64&4580&4798&3321.06&3239&3433&3093.14&3016&3184\\
p2p-Gnutella31&4905&6904.14&6619&7468&5571.38&5432&5712&5014.85&4929&5071\\
\hline
\end{tabular}
\end{center}
\caption{Vertex Separator Costs $\C{C}(\C{S})$ for peer-to-peer networks.
\label{tab:p2p}}
\end{table}

\begin{table}
\begin{center}
\begin{tabular}{l r | r r r | r r r }
Graph & Best & \multicolumn{3}{c|}{METIS{\underscore}RM} & \multicolumn{3}{c}{BLP{\underscore}RM}\\
\hline
& & avg & min & max & avg & min & max \\
out.as-skitter&15006&20654.00&15006&23149&30834.29&26728&32380\\
out.com-amazon&4237&4470.61&4237&4644&5622.31&5386&5893\\
out.com-dblp&10600&10859.04&10600&11136&12762.98&11855&13848\\
out.com-youtube&16729&28355.39&26911&30371&17480.22&16729&18165\\
out.roadNet-CA&99&124.13&99&168&336.60&127&1222\\
out.roadNet-PA&109&128.90&109&155&343.11&131&865\\
out.roadNet-TX&59&77.41&59&130&248.71&74&533\\
\hline
\end{tabular}
\end{center}
\caption{Vertex Separator Costs $\C{C}(\C{S})$ for Konect networks.
\label{tab:konect}}
\end{table}

\begin{table}
\begin{center}
\begin{tabular}{l | r | r r r}
& &\multicolumn{3}{c}{\% Improvement}\\
\hline
Graph Type & \% BLP Wins & avg & min & max \\ 
\hline
UF&35.00&0.10&-0.62&2.00\\
p2p&100.00&3.52&2.13&4.39\\
HTDD&62.50&0.84&-0.20&2.80\\
Hard&73.33&0.96&-0.63&8.33\\
Konect&14.29&-0.09&-0.60&0.96\\
Total&55.93&0.92&-0.63&8.33\\
\hline
\end{tabular}
\end{center}
\caption{Comparison of separator costs $\C{C}(\C{S})$ obtained by BLP{\underscore}RM
and METIS{\underscore}RM. \label{table_avgs4}}
\end{table}

\begin{table}
\begin{center}
\begin{tabular}{l | r | r r r}
& &\multicolumn{3}{c}{\% Improvement}\\
\hline
Graph Type & \% BLP Wins & avg & min & max \\ 
\hline
UF&45.00&0.26&-0.50&2.75\\
p2p&100.00&4.04&3.02&4.57\\
HTDD&74.22&1.22&-0.11&3.36\\
Hard&77.81&1.33&-0.08&8.35\\
Total&68.48&1.37&-0.50&8.35\\
\hline
\end{tabular}
\end{center}
\caption{Comparison of separator costs $\C{C}(\C{S})$ obtained by BLP{\underscore}RMFM
and METIS{\underscore}RM.
\label{table_avgs5}}
\end{table}

\begin{table}
\begin{center}
\begin{tabular}{l | r | r r r}
& &\multicolumn{3}{c}{\% Improvement}\\
\hline
Graph Type & \% BLP Wins & avg & min & max \\ 
\hline
UF&30.00&-0.02&-0.57&1.16\\
p2p&100.00&2.59&0.65&3.44\\
HTDD&50.00&0.67&-0.30&2.41\\
Hard&46.67&0.38&-1.55&5.58\\
Total&50.00&0.65&-1.55&5.58\\
\hline
\end{tabular}
\end{center}
\caption{Comparison in separator costs $\C{C}(\C{S})$ obtained by BLP{\underscore}HE
and METIS{\underscore}HE. \label{table_avgs1}}
\end{table}

\begin{table}
\begin{center}
\begin{tabular}{l | r | r r r}
& &\multicolumn{3}{c}{\% Improvement}\\
\hline
Graph Type & \% BLP Wins & avg & min & max \\ 
\hline
UF&40.00&0.17&-0.67&1.92\\
p2p&100.00&3.11&1.57&3.61\\
HTDD&62.50&0.66&-0.10&2.13\\
Hard&60.00&0.75&-0.52&5.58\\
Total&59.62&0.92&-0.67&5.58\\
\hline
\end{tabular}
\end{center}
\caption{Comparison of separator costs $\C{C}(\C{S})$ obtained by BLP{\underscore}HEFM
and METIS{\underscore}HE.
\label{table_avgs3}}
\end{table}


In some applications, such as cybersecurity, the size of a vertex separator
is of primary importance, while in other applications, such as sparse matrix
reordering, small separators must be found quickly. 
Table \ref{table_times}
compares the average, geometric mean, minimum, and maximum CPU time (in seconds)
for BLP{\underscore}RM and METIS{\underscore}RM
on each of the five test sets.
For the first four test sets in this table, the average CPU time for BLP was
on average about 260 times slower than METIS.
However, we note that the average BLP solution was strictly better than the
best METIS solution (over 100 trials) in 25 out the 52 instances in these
test sets, and in fact for all 9 p2p instances.
For the Konect graphs, the CPU time gap between BLP and METIS was approximately
28 times. We suspect that this relative improvement in CPU time is due to the
reduced number of refinement phases used for these problems. 
Finally, we stress that BLP has not been optimized for speed.
Figure~\ref{graph} gives a
log-log plot of $n = |\C{V}|$ versus CPU time for all 59 instances.
The best fit line through the data in the log-log plot has a slope of
approximately 1.65, which indicates that the CPU time of BLP is
between a linear and a quadratic function of the number of vertices.

\begin{table}
\begin{center}
\begin{tabular}{l | r r r r | r r r r}
& \multicolumn{4}{c|}{BLP{\underscore}RM} & \multicolumn{4}{c}{METIS{\underscore}RM}\\
\hline
Graph Type & avg & geomean & min & max & avg & geomean & min & max\\ 
\hline
UF&0.56&0.34&0.03&3.08&0.01&0.00&0.00&0.12\\
p2p&36.08&14.99&1.48&276.09&0.16&0.13&0.05&0.49\\
HTDD&18.34&7.64&0.64&104.82&0.13&0.08&0.01&0.55\\
Hard&88.59&32.23&1.58&719.48&0.28&0.21&0.05&0.84\\
Konect&9258.19&5989.74&689.58&27888.17&334.55&4.97&0.63&2702.44\\
Total&1264.62&10.44&0.03&27888.17&44.72&0.00&0.00&2702.44\\
\hline
\end{tabular}
\end{center}
\caption{CPU times (in seconds) for each algorithm on different graph types.
\label{table_times}}
\end{table}


In order to determine the computational bottlenecks of BLP, we
examined a flat profile of the code, using the Linux utility GNU gprof.
We randomly selected one problem from each of the first
four test sets and found that
between 61\% and 87\% of the CPU time was consumed by the routine which
implements the greedy algorithm for solving the linear programs in MCA.
The remainder of the CPU time was shared by objective value computations,
matrix vector product computations (between $\m{A}$ and $\m{x}$ or $\m{y}$),
and Karush-Kuhn-Tucker multiplier computations, which were used to 
determine the perturbations in
$\m{c}$ or $\gamma$ required to escape a local optimum.

Thus, for applications in which speed is more important than solution quality,
the following modifications may be investigated:
\begin{itemize}
\item [1.] Instead of resolving the linear program in MCA from scratch,
we could exploit the structure of the previously computed solution to update it.
\item [2.] In each iteration of the Mountain Climbing Algorithm, we need
the products $\m{Ax}_k$ and $\m{Ay}_k$ between the matrix and a vector.
We could save the previous products $\m{Ax}_{k-1}$ and $\m{Ay}_{k-1}$ and
only recompute the parts of the products that change.
\end{itemize}

\begin{figure}[t]
\begin{center}
\includegraphics[scale=0.5]{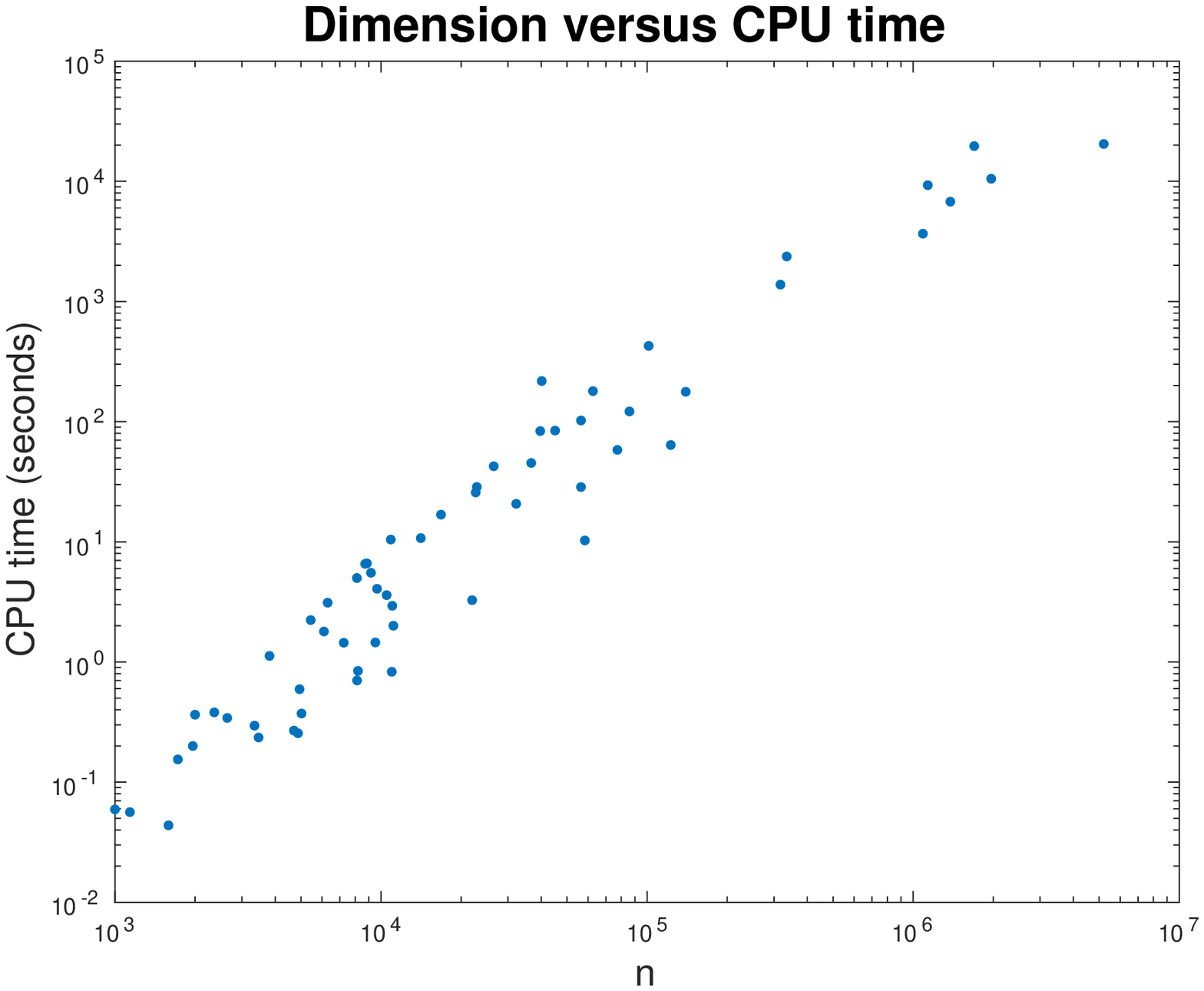}
\end{center}
\caption{Number of vertices $n$ versus CPU time for BLP{\underscore}RM.\label{graph}}
\end{figure}

\section{Conclusion}
\label{sectConclude}

We have developed a new algorithm (BLP) for solving large-scale instances of the
Vertex Separator Problem (\ref{VSP}).
The algorithm incorporates a multilevel framework; that is,
the original graph is coarsened several times; the problem is solved for the
coarsest graph; and the solution to the coarse graph is gradually uncoarsened
and refined to obtain a solution to the original graph. A key feature of the
algorithm is the use of the continuous bilinear program (\ref{CVSP}) in both the
solution and refinement phases. In the case where vertex weights are all equal
to one (or a constant), (\ref{CVSP}) is an exact formulation of the VSP in the
sense that there exists a binary solution satisfying (\ref{sepcond}),
from which an optimal solution to the VSP can be recovered using (\ref{ABS}).
When vertex weights are not all equal, we showed that (\ref{CVSP}) still
approximates the VSP in the sense that there exists a mostly binary solution.


During the solution and refinement phases of BLP, the bilinear program
is solved approximately by applying the algorithm MCA{\underscore}GR,
a mountain climbing algorithm which incorporates perturbation
techniques to escape from stationary points and explore a new part
of the search space.
One technique, referred to as $\m{c}$-perturbations, uses the
first-order optimality conditions to derive a tiny perturbation that
will force an iterate to a new location with a possibly improved separator.
The second technique, referred to as $\gamma$-perturbations,
improves the separator by relaxing the requirement that
there are no edges between the sets in the partition.
We determined the smallest relaxation that will generate a new partition.
To our knowledge, this is the first multilevel algorithm
to make use of a continuous optimization based
refinement method for the family of graph partitioning problems.
The numerical results of Section~\ref{sectResults}
indicate that BLP is capable of locating vertex separators of high quality
(comparing against METIS), and is particularly effective on
p2p graphs, HTDD graphs (graphs with heavy-tailed distributions), and
graphs having relatively sparse separators.

\newpage
\bibliographystyle{siam}

\begin{thebibliography}{10}

\bibitem{AcerKayaaslan}
{\sc S.~Acer, E.~Kayaaslan, and C.~Aykanat}, {\em A recursive bipartitioning
  algorithm for permuting sparse square matrices into block diagonal form with
  overlap}, SIAM Journal on Scientific Computing, 35 (2013), pp.~C99--C121.

\bibitem{AshcraftLiu94}
{\sc C.~C. Ashcraft and J.~W.~H. Liu}, {\em A partition improvement algorithm
  for generalized nested dissection}, Tech. Rep. BCSTECH-94-020, Boeing
  Computer Services, Seattle, WA, 1994.

\bibitem{BenlicHao}
{\sc U.~Benlic and J.~Hao}, {\em Breakout local search for the vertex separator
  problem}, in Proceedings of the Twenty-Third International Joint Conference
  on Artificial Intelligence, 2013.

\bibitem{vlsicad}
{\sc A.~Brandt and D.~Ron}, {\em Chapter 1 : Multigrid solvers and multilevel
  optimization strategies}, in Multilevel Optimization and VLSICAD, J.~Cong and
  J.~R. Shinnerl, eds., Kluwer, 2003.

\bibitem{BuiJones92}
{\sc T.~Bui and C.~Jones}, {\em Finding good approximate vertex and edge
  partitions is {NP}-hard}, Information Processing Letters, 42 (1992),
  pp.~153--159.

\bibitem{bmsss13}
{\sc A.~Bulu{\c{c}}, H.~Meyerhenke, I.~Safro, P.~Sanders, and C.~Schulz}, {\em
  Recent advances in graph partitioning}, Accepted in Algorithm Engineering:
  Selected Results and Surveys. LNCS 9220, Springer-Verlag, Arxiv,
  abs/1311.3144 (2014).

\bibitem{raey}
{\sc M.~Burmester, Y.~Desmedt, and Y.~Wang}, {\em Using approximation hardness
  to achieve dependable computation}, in Randomization and Approximation
  Techniques in Computer Science, M.~Luby, J.~Rolim, and M.~Serna, eds.,
  vol.~1518 of Lecture Notes in Computer Science, Springer Berlin Heidelberg,
  1998, pp.~172--186.

\bibitem{Safro11}
{\sc J.~Chen and I.~Safro}, {\em Algebraic distance on graphs}, {SIAM} J. Sci.
  Comput., 33 (2011), pp.~3468--3490.

\bibitem{UFcollection}
{\sc T.~A. Davis}, {\em University of {F}lorida sparse matrix collection},
  1994.
\newblock NA Digest, vol 92, no. 42.

\bibitem{Davis06book}
\leavevmode\vrule height 2pt depth -1.6pt width 23pt, {\em Direct Methods for
  Sparse Linear Systems}, SIAM, Philadelphia, PA, 2006.

\bibitem{Evrendilek2008}
{\sc C.~Evrendilek}, {\em Vertex separators for partitioning a graph}, Sensors,
  8 (2008), pp.~635--657.

\bibitem{FeigeHajiaghayiLee2008}
{\sc U.~Feige, M.~Hajiaghayi, and J.~Lee}, {\em Improved approximation
  algorithms for vertex separators}, SIAM J. Comput., 38 (2008), pp.~629--657.

\bibitem{FiducciaMattheyses82}
{\sc C.~M. Fiduccia and R.~M. Mattheyses}, {\em A linear-time heuristic for
  improving network partitions}, in Proc. 19th Design Automation Conf., Las
  Vegas, NV, 1982, pp.~175--181.

\bibitem{Fukuyama2006}
{\sc J.~Fukuyama}, {\em {NP}-completeness of the planar separator problems},
  Journal of Graph Algorithms and Applications, 10, No. 2 (2006), pp.~317--328.

\bibitem{GeorgeLiu}
{\sc A.~George and J.~W.~H. Liu}, {\em Computer Solution of Large Sparse
  Positive Definite Systems}, Prentice-Hall, Englewood Cliffs, NJ, 1981.

\bibitem{GilbertZmijewski87}
{\sc J.~R. Gilbert and E.~Zmijewski}, {\em A parallel graph partitioning
  algorithm for a message-passing multiprocessor}, Intl. J. Parallel
  Programming, 16 (1987), pp.~498--513.

\bibitem{HagerHungerford13}
{\sc W.~W. Hager and J.~T. Hungerford}, {\em Optimality conditions for
  maximizing a function over a polyhedron}, Math. Program., 145 (2014),
  pp.~179--198.

\bibitem{HagerHungerford14}
\leavevmode\vrule height 2pt depth -1.6pt width 23pt, {\em Continuous quadratic
  programming formulations of optimization problems on graphs}, European J.
  Oper. Res., 240 (2015), pp.~328--337.

\bibitem{Hendrickson}
{\sc B.~Hendrickson and R.~Leland}, {\em A multilevel algorithm for
  partitioning graphs}, in Proc. Supercomputing '95, IEEE, Nov. 1995.

\bibitem{HendricksonRothberg97b}
{\sc B.~Hendrickson and E.~Rothberg}, {\em Effective sparse matrix ordering:
  just around the bend}, in Proc. of 8th SIAM Conference on Parallel Processing
  for Scientific Computing, 1997.

\bibitem{KarypisKumar95}
{\sc G.~Karypis and V.~Kumar}, {\em {METIS}: unstructured graph partitioning
  and sparse matrix ordering system}, tech. rep., Dept. of Computer Science,
  Univ. of Minnesota, 1995.

\bibitem{KarypisKumar98e}
\leavevmode\vrule height 2pt depth -1.6pt width 23pt, {\em A fast and high
  quality multilevel scheme for partitioning irregular graphs}, {SIAM} J. Sci.
  Comput., 20 (1998), pp.~359--392.

\bibitem{KayaaslanPinar}
{\sc E.~Kayaaslan, A.~Pinar, U.~Cataly\"{u}rek, and C.~Aykanat}, {\em
  Partitioning hypergraphs in scientific computing applications through vertex
  separators}, {SIAM} J. Sci. Comput., 34(2) (2012), pp.~A970--A992.

\bibitem{KernighanLin70}
{\sc B.~W. Kernighan and S.~Lin}, {\em An efficient heuristic procedure for
  partitioning graphs}, Bell System Tech. J., 49 (1970), pp.~291--307.

\bibitem{kim2013combinatoric}
{\sc D.~Kim, G.~R. Frye, S.-S. Kwon, H.~J. Chang, and A.~O. Tokuta}, {\em On
  combinatoric approach to circumvent internet censorship using decoy routers},
  in Military Communications Conference, 2013. MILCOM 2013., IEEE, 2013,
  pp.~1--6.

\bibitem{konect}
{\em The {K}oblenz {N}etwork collection}.
\newblock {Online reference at} {\path|http://konect.uni-koblenz.de/|}.

\bibitem{Konno76}
{\sc H.~Konno}, {\em A cutting plane algorithm for solving bilinear programs},
  Mathematical Programming, 11 (1976), pp.~14--27.

\bibitem{Leiserson80}
{\sc C.~Leiserson}, {\em Area-efficient graph layout (for {VLSI})}, in Proc.
  21st Annual Symposium on the Foundations of Computer Science, IEEE, 1980,
  pp.~270--281.

\bibitem{LeisersonLewis89}
{\sc C.~Leiserson and J.~Lewis}, {\em Orderings for parallel sparse symmetric
  factorization}, in Third SIAM Conference on Parallel Processing for
  Scientific Computing, SIAM Publications, 1987, pp.~27--31.

\bibitem{snapnets}
{\sc J.~Leskovec and A.~Krevl}, {\em {SNAP Datasets}: {Stanford} large network
  dataset collection}.
\newblock \url{http://snap.stanford.edu/data}, June 2014.

\bibitem{LitsasPPS13}
{\sc C.~Litsas, A.~Pagourtzis, G.~Panagiotakos, and D.~Sakavalas}, {\em On the
  resilience and uniqueness of {CPA} for secure broadcast}, IACR Cryptology
  ePrint Archive,  (2013).

\bibitem{MillerTengThurstonVavasis98}
{\sc G.~Miller, S.~H. Teng, W.~Thurston, and S.~Vavasis}, {\em Geometric
  separators for finite element meshes}, {SIAM} J. Sci. Comput., 19 (1998),
  pp.~364--386.

\bibitem{NocedalWright2006}
{\sc J.~Nocedal and S.~J. Wright}, {\em Numerical Optimization}, Springer, New
  York, 2nd~ed., 2006.

\bibitem{PothenSimonLiou90}
{\sc A.~Pothen, H.~D. Simon, and K.~Liou}, {\em Partitioning sparse matrices
  with eigenvectors of graphs}, {SIAM} J. Matrix Anal. Appl., 11 (1990),
  pp.~430--452.

\bibitem{Ripeanu02mappingthe}
{\sc M.~Ripeanu, I.~Foster, and A.~Iamnitchi}, {\em Mapping the {Gnutella}
  network: Properties of large-scale peer-to-peer systems and implications for
  system design}, IEEE Internet Computing Journal, 6 (2002), pp.~50--57.

\bibitem{ron2011relaxation}
{\sc D.~Ron, I.~Safro, and A.~Brandt}, {\em Relaxation-based coarsening and
  multiscale graph organization}, Multiscale Modeling \& Simulation, 9 (2011),
  pp.~407--423.

\bibitem{SafroRonBrandt06a}
{\sc I.~Safro, D.~Ron, and A.~Brandt}, {\em Graph minimum linear arrangement by
  multilevel weighted edge contractions}, J. Algorithms, 60 (2006), pp.~24--41.

\bibitem{Safro12}
{\sc I.~Safro, P.~Sanders, and C.~Schulz}, {\em Advanced coarsening schemes in
  graph partitioning}, in Symposium on Experimental Algorithms, LNCS,
  vol.~7276, 2012, pp.~369--380.

\bibitem{Ullman84}
{\sc J.~Ullman}, {\em Computational Aspects of {VLSI}}, Computer Science Press,
  Rockville, Md., 1984.

\end{thebibliography}

\end{document}